\documentclass[12pt]{article}
\usepackage[margin=1in]{geometry}
\usepackage{amsmath, amssymb, amsthm, bm}
\usepackage{setspace}
\usepackage{natbib}
\usepackage[
  colorlinks=true,
  citecolor=red,
  linkcolor=blue,
  urlcolor=blue,
  filecolor=violet
]{hyperref}
\usepackage{graphicx}
\usepackage[title]{appendix}
\usepackage{xcolor}
\newtheorem{remark}{Remark}
\newtheorem{assumption}{Assumption}
\newtheorem{theorem}{Theorem}
\newtheorem{lemma}{Lemma}
\usepackage{algorithm}
\usepackage{algpseudocode}
\usepackage{comment}
\usepackage{rotating}
\usepackage{booktabs}
\usepackage{graphicx}
\usepackage{subcaption}
\usepackage{footnote}
\makesavenoteenv{algorithm}
 \usepackage{xr}

\date{This version: June 7, 2026}
\begin{document}
\title{Estimator Averaging of Local Projection and VAR Impulse Responses\thanks{We thank the participants of the AMEF 2026 and the $3$rd MNB-Fudan workshop 2025 for their valuable comments. The views expressed in this paper are those of the authors and do not necessarily reflect the views of the Magyar Nemzeti Bank. Refine.ink was used to check the paper for consistency and clarity.}\\
}
\author{Chaoyi Chen\thanks{%
Magyar Nemzeti Bank (Central Bank of Hungary),  Budapest, 1054, Hungary; Email: \texttt{chenc@mnb.hu}.}\  \and Elena Pesavento\thanks{Department of Economics, Emory University, Atlanta, GA, 30322–2240, USA; Email: \texttt{epesave@emory.edu} }\ \and Bal\'azs Vonn\'ak \thanks{%
Magyar Nemzeti Bank (Central Bank of Hungary),  Budapest, 1054, Hungary; Email: \texttt{vonnakb@mnb.hu.}} } 

\maketitle
\begin{abstract}
Local projections (LP) and vector autoregressions (VAR) are the two standard tools for impulse response analysis, but they often display a finite-sample trade-off: LP is typically less biased but more volatile, while VAR is more precise but can be biased under misspecification. We propose an easy-to-implement estimator-averaging approach that combines LP and VAR at each horizon by minimizing the mean squared error of the impulse response itself, rather than in-sample fit. We derive closed-form oracle weights for this finite-sample risk problem, develop feasible AR-sieve-bootstrap procedures, and compare them against shrinkage-based Targeted Local Projections (TLP) and an $R^2$-based model-averaging benchmark. For a benchmark class of short-memory linear data generating processes in which LP and VAR are both consistent, we establish the consistency and limiting distribution of the feasible averaged estimator. Monte Carlo results show that the estimator tracks the oracle closely and compares favorably with LP, VAR, fit-based model averaging, and shrinkage-based TLP. In an empirical application revisiting \citet{BauerSwanson2023}, estimator averaging delivers stable and economically intuitive responses for yields, activity, prices, and credit spreads.

 \bigskip \bigskip
\noindent \textit{JEL Classification}: C32, C52, C53, E52 \\
\noindent \textit{Keywords}: Local projections; Vector autoregressions; Impulse response functions; Estimator averaging; Model averaging; Monetary policy shocks

\end{abstract}

\setcounter{page}{1}\parskip0.5em \baselineskip18pt
\numberwithin{equation}{section}
\doublespacing
\section{Introduction}
\label{sec_intro}

Local projections (LP) and vector autoregressions (VAR) are the two workhorse approaches to estimating impulse response functions (IRF). In practice, however, the two methods often deliver noticeably different estimates, especially at intermediate and long horizons. This creates a natural problem for empirical researchers: if LP and VAR provide conflicting answers, how should one combine the information in the two estimators?

A useful starting point is that the distinction is not primarily about the population object being estimated. \citet{PlagborgMollerWolf2021} show that, with sufficiently rich lag structure, LPs and VARs estimate the same population IRFs. The key difference is therefore one of finite-sample behavior. As emphasized in recent work, LP and VAR exhibit a familiar finite-sample bias--variance trade-off (see, e.g., \citealp{Lietal2024}; \citealp{MontielOleaetal2025}). LPs tend to have lower bias but higher variance, especially at intermediate and long horizons, because they estimate separate horizon-specific regressions. VARs, by imposing parametric dynamic structure, typically achieve lower variance by borrowing strength across horizons, but they can incur higher bias when the DGP deviates from a finite-order VAR (\citealp{Lietal2024}).

This makes combining LP and VAR naturally appealing from a mean-squared-error perspective: an averaged estimator can potentially exploit LP's low-bias properties and VAR's low-variance properties, provided the weights adapt to horizon-specific performance. \citet{MontielOleaEtAl2026} sharpen this trade-off in a local-misspecification framework. They show that LP remains first-order robust for inference, whereas VAR can suffer first-order bias that is relevant for coverage. This highlights a practical tension: LP is attractive when robustness to misspecification is the priority, while VAR is attractive when finite-sample precision is the priority.

Existing work on combining IRF estimators has followed several related routes. One strand takes a fit-oriented model-averaging approach. For example, \citet{HounyoJung2025} propose a two-stage scheme that averages within LPs and within VARs, and then blends the two classes. Such procedures often yield smooth, interpretable IRFs and stable weights when many horizons are pooled. However, because the objective is in-sample \textit{fit}, they do not directly target the estimation risk of the structural IRF. A closely related but distinct and contemporaneous contribution is \citet{NemtyrevBoldea2026}, who develop Targeted Local Projections (TLP), a shrinkage estimator that pulls LP impulse responses toward their SVAR counterparts in order to reduce variance at the cost of some bias.\footnote{The work of \citet{NemtyrevBoldea2026} was developed independently and contemporaneously with ours. Section~\ref{subsec:TLP} discusses the relationship between TLP and our estimator-averaging approach.} Their framework is developed explicitly under a local-misspecification asymptotic setup and is complemented by bootstrap-based inference designed to improve coverage in that setting.

We take a different route. Our approach is a symmetric estimator-averaging procedure that selects weights to minimize the expected squared error of the IRF itself. Specifically, for each horizon $h$, we choose $w_h$ to minimize the population or estimated risk---variance or MSE---of the convex combination
$\widehat{\theta}_h(w_h)
= w_h\,\widehat{\theta}_{\mathrm{LP},h}+(1-w_h)\,\widehat{\theta}_{\mathrm{VAR},h}$. Our approach is symmetric between LP and VAR and directly tied to the object of interest. Rather than selecting the best-fitting model (\citealp{HounyoJung2025}) or shrinking one estimator toward the other as a baseline (\citealp{NemtyrevBoldea2026}), we ask how to combine the two estimators so as to minimize IRF estimation risk. This prioritizes point-estimation accuracy for the IRF itself, rather than in-sample fit, by choosing weights that directly reflect the LP--VAR bias--variance trade-off and by exploiting the covariance between LP and VAR to reduce sampling noise.

Relative to the existing literature, our contribution is threefold. First, in population, we derive closed-form finite-sample infeasible oracle weights that minimize the MSE of the combined estimator and make transparent how the optimal LP share varies with the horizon and with the underlying LP--VAR trade-off. Second, we develop the asymptotic theory for the AR-sieve plug-in averaged estimator under a benchmark short-memory linear DGP in which both LP and VAR are consistent for the same population impulse response. In that benchmark, the limiting risk is variance-based, and the bootstrap bias terms in Algorithm~\ref{alg:SV-weight} are a finite-sample refinement that is asymptotically negligible but empirically useful. This positions our framework as complementary to \citet{NemtyrevBoldea2026}, who study a closely related linear combination of LP and VAR estimators under a local-to-VAR drifting DGP and derive its asymptotic bias--variance trade-off in that regime. We use Monte Carlo simulations to assess finite-sample performance across correctly specified, locally misspecified, and fixed-misspecification designs, and compare our estimator with LP, VAR, TLP shrinkage, and an $R^2$-based model-averaging benchmark. The results show that the finite-sample MSE criterion and AR-sieve-bootstrap implementation perform well in the benchmark setting and remain effective under misspecification. Third, in an empirical application revisiting the high-frequency monetary policy shocks of \citet{BauerSwanson2023}, we show that estimator averaging systematically reconciles the often volatile IV-LP and very smooth IV-VAR responses: the estimated weights put more mass on LP at short horizons and on VAR at longer horizons, and the resulting IRFs for the two-year yield, industrial production, consumer prices, and the excess bond premium are reasonably smooth, lie between LP and VAR, and are arguably more economically intuitive than either estimator on its own.

The rest of the paper is organized as follows. Section~\ref{sec:estimator} presents the estimator-averaging framework, derives the oracle weights, outlines feasible implementations, and compares the approach with shrinkage and model-averaging benchmarks. Section~\ref{sec:asy_theory} presents the required assumptions and derives the large-sample properties of the estimator. Section~\ref{sec:MC} reports Monte Carlo evidence on small-sample performance and compares estimator averaging with shrinkage and model-averaging benchmarks. Section~\ref{sec:application} revisits \citet{BauerSwanson2023} using our methods and documents the empirical pattern of weights and IRFs. Section~\ref{sec:conclusion} concludes. All mathematical proofs are collected in the appendix.

\section{Estimators}
\label{sec:estimator}
In this section, we review the definitions of LP- and VAR-based IRF estimators and introduce our LP--VAR averaged estimator. We derive the optimal weights by minimizing the MSE of the combined estimator---an approach known as \textit{estimator averaging} (see, e.g., \citealp{mittelhammer2005combining}). We then discuss two related benchmarks: the TLP shrinkage estimator of \citet{NemtyrevBoldea2026} and the $R^2$-based model-averaging rule of \citet{HounyoJung2025}. The estimator-averaging framework presented here provides the core idea and can be extended to broader settings---for example, to averaging across multiple LP estimators, multiple VAR estimators, or a larger set of IRF estimators.

\subsection{LP-Based IRF}
The local projection method directly regresses the future outcome of a target variable, $y_t$, on a current impulse variable and a set of controls. Let $Y_t$ denote the $n \times 1$ vector of observed macroeconomic variables, which includes $y_t$. The horizon-$h$ LP regression is given by
\begin{equation}
y_{t+h} \;=\; c_h \;+\; \theta_h\, x_t \;+\; \gamma_h' z_t \;+\; u_{t+h},\qquad t=1,\dots,T-h,
\label{eq:LP}
\end{equation}
where $c_h$ collects deterministic terms (e.g., a constant or trends), $x_t$ is the impulse variable of interest, and $z_t$ is a vector of control variables. Typically, $z_t$ consists of lagged values of the system variables, $z_t = (Y_{t-1}', Y_{t-2}', \dots, Y_{t-p}')'$. When $x_t$ is not an observed structural shock, identifying $\theta_h$ requires additional restrictions, such as an external instrument or a recursive ordering with appropriate contemporaneous controls. The LP estimator of the structural impulse response is the OLS (or 2SLS) estimate of the scalar coefficient on the impulse variable, denoted $\widehat\theta_{\mathrm{LP},h}$. A key advantage of this approach is that it requires no dynamic parametric assumptions beyond the linear projection itself.

\subsection{VAR-Based IRF}
Alternatively, the vector autoregression (VAR) approach (see, e.g., \citealp{Sims1980}) extrapolates the impulse response by iterating on a reduced-form linear dynamic system. A standard VAR($p$) model for the full vector of observables $Y_t$ is given by
\begin{equation}
Y_t \;=\; c \;+\; \sum_{l=1}^p A_l Y_{t-l} \;+\; v_t,
\label{eq:VAR}
\end{equation}
where $c$ contains the deterministic terms, $A_l$ are the $n \times n$ matrices of autoregressive coefficients, and $v_t$ is the vector of reduced-form innovations. This system is typically estimated equation-by-equation via OLS. To recover the structural impulse responses, an identification scheme (e.g., recursive ordering via Cholesky decomposition, or external instruments/proxy SVARs) is specified to map the reduced-form innovations to the underlying structural shocks, $v_t = M_0 \epsilon_t$. 

From the fitted dynamics and the identified structural impact matrix $M_0$, the implied vector moving average (VMA) representation is computed. The VAR-based estimator, $\widehat\theta_{\mathrm{VAR},h}$, is then defined as the relevant scalar element of the $h$-step-ahead structural VMA matrix $
\widehat\theta_{\mathrm{VAR},h} \;\equiv\; g_h\bigl(\widehat A_1, \dots, \widehat A_p, \widehat M_0\bigr)$ where $g_h(\cdot)$ is the non-linear function mapping the reduced-form VAR parameters and the identification matrix to the horizon-$h$ impulse response of $y_t$.

Throughout, $\hat\theta_{\mathrm{LP},h}$ and $\hat\theta_{\mathrm{VAR},h}$ are interpreted as estimators of the same scalar structural impulse response $\theta_h$, corresponding to the relevant element of the structural Wold response path induced by the impact matrix $M_0$ and normalized to the same unit structural shock. In designs with an observed shock, this normalization is immediate. In IV or proxy-SVAR applications, both estimators use the same external instrument and the same shock normalization, so that the averaged estimator combines two estimates of the same structural object.\footnote{In our empirical application, all responses are normalized to a 25-basis-point contractionary monetary policy shock.}

\subsection{Combined Estimators}

As is well documented, LPs and VARs exhibit a finite-sample bias--variance trade-off, motivating methods that combine information from both estimators. We first present our estimator-averaging approach and then review two related but conceptually distinct benchmarks: shrinkage and model averaging. Our proposed \textit{estimator averaging} approach chooses weights to minimize the sampling risk---variance or MSE---of the combined IRF estimator. In our implementation, the horizon-specific weights directly target the MSE of the IRF itself. We then review the \textit{shrinkage} approach, represented by the TLP method of \citet{NemtyrevBoldea2026}, which shrinks LP estimates toward their VAR/SVAR counterparts using weights motivated by asymptotic MSE under local misspecification. Finally, we review \textit{model averaging}, which chooses weights based on in-sample fit rather than IRF estimation risk; we implement this using an $R^2$-based benchmark following \citet{HounyoJung2025}.\footnote{In the Monte Carlo experiments in Section~\ref{sec:MC}, we compare our plug-in MSE-optimal estimator, implemented using the AR-sieve bootstrap, with both the TLP shrinkage benchmark and the $R^2$-based model-averaging benchmark.}

\subsubsection{Estimator Averaging with MSE-Minimizing Weights}
At each horizon $h$, define the averaged IRF
\begin{equation}
\widehat\theta_h(w_h) \;=\; w_h\,\widehat\theta_{\mathrm{LP},h} \;+\; (1-w_h)\,\widehat\theta_{\mathrm{VAR},h},\qquad w_h\in[0,1],
\label{eq:avg}
\end{equation}
where $\widehat\theta_{\mathrm{LP},h}$ is the LP-based IRF estimate, $\widehat\theta_{\mathrm{VAR},h}$ is the VAR-based IRF estimate, and $w_h$ is the weight on LP at horizon $h$.

We decompose the (population) MSE of $\widehat\theta_h(w_h)$ in \eqref{eq:avg} into variance plus squared bias\footnote{The MSE criterion treats variance and squared bias symmetrically as components of quadratic loss. This corresponds to a point-estimation risk objective, consistent with the discussion in \citet{MontielOleaetal2025} that the desirability of trading bias for variance depends on the researcher's objective.}:
\begin{align}
\label{eq:msequad}
\mathrm{MSE}_h(w_h)
&=\mathbb{E}\!\left[(\widehat\theta_h(w_h)-\theta_h)^{2}\right] \nonumber\\
&= \underbrace{w_h^2 V_{L,h} + (1-w_h)^2 V_{V,h} + 2w_h(1-w_h) C_h}_{\text{variance}}
\;+\;
\underbrace{\big(w_h\,b_{L,h} + (1-w_h)\,b_{V,h}\big)^2}_{\text{bias}^2} \nonumber\\
&=w_h^2(V_{L,h}+b_{L,h}^2) + (1-w_h)^2(V_{V,h}+b_{V,h}^2) + 2w_h(1-w_h)\big(C_h+b_{L,h}b_{V,h}\big),
\end{align}
where $b_{L,h}= \mathbb{E}[\widehat\theta_{\mathrm{LP},h}]-\theta_h$, $b_{V,h}= \mathbb{E}[\widehat\theta_{\mathrm{VAR},h}]-\theta_h,$ $V_{L,h}= \mathrm{Var}(\widehat\theta_{\mathrm{LP},h}),$ $V_{V,h}=\operatorname{Var}(\hat\theta_{\mathrm{VAR},h})$, and $C_h= \mathrm{Cov}(\widehat\theta_{\mathrm{LP},h},\widehat\theta_{\mathrm{VAR},h}).$

Let $a_h=V_{L,h}+b^2_{L,h}$, $d_h=V_{V,h}+b^2_{V,h}$, and $f_h=C_h+b_{L,h}b_{V,h}$. Differentiating \eqref{eq:msequad} with respect to $w_h$ and solving the first-order condition yields the unconstrained interior candidate
\begin{align}
\widetilde w_h^\star
=
\frac{d_h-f_h}{a_h+d_h-2f_h}.
\label{eq:tildewstar}
\end{align}
Since the averaged estimator is restricted to be a convex combination, the constrained oracle weight is obtained by truncating the interior solution to the unit interval: 
\begin{align}
 w_h^\star=\min\{1,\max\{0,\widetilde w_h^\star\}\}.
 \label{eq:wstar}
 \end{align}
Therefore, when the unconstrained minimizer lies in the interior of $[0,1]$, we have $w_h^\star=\widetilde w_h^\star$, and the expression reduces to \eqref{eq:tildewstar}. Because $w_h^{\star}$ depends on unknown population quantities, we estimate it using a semiparametric AR-sieve-bootstrap, following the time-series bootstrap literature (see, e.g., \citealp{Buhlmann1997}; \citealp{KreissPaparoditis2003}; \citealp{GoncalvesKilian2004, Goncalves2007}). Specifically, we fit an AR($p$) model with $p$ selected by BIC, resample the estimated residuals nonparametrically, simulate pseudo time series, and re-estimate the relevant objects to obtain $(\widehat a_h,\widehat d_h,\widehat f_h)$. This approach avoids committing to a fixed finite-order structural DGP and instead allows the bootstrap to approximate a broader class of short-memory linear processes through a growing-order autoregressive sieve, while still delivering a feasible estimator of $w_h^{\star}$. We refer to the procedure as the AR-sieve-bootstrap plug-in estimator, or simply the \emph{plug-in estimator}. Algorithm~\ref{alg:SV-weight} summarizes the procedure.\footnote{For ease of exposition, Algorithm~\ref{alg:SV-weight} is written for a univariate model. It can easily be extended to a multivariate $\mathrm{VAR}(p_T)$ sieve.}

\begin{algorithm}[!htbp]
\caption{AR-sieve-bootstrap plug-in weight $\widehat w_h$ (implemented jointly over $h$)}
\label{alg:SV-weight}
\begin{algorithmic}[1]
\State Select $\hat p$ (e.g., by BIC over $0\leq p\leq p_{\max,T}$, where $p_{\max,T}\to\infty$ sufficiently slowly), and choose the number of bootstrap draws $B$.\footnote{The formal theory treats the sieve order as satisfying the standard AR-sieve growth conditions. We do not prove that the BIC-selected lag automatically satisfies these conditions.}
\State Fit AR($\hat p$): $y_t=\sum_{j=1}^{\hat p}\widehat\phi_j y_{t-j}+\widehat \eta_t$; set $\tilde \eta_t=\widehat \eta_t-\bar \eta$.
\State Construct the sieve pseudo-truth $\tilde\theta_h^{sieve}$ for the same normalized structural response $\theta_h$ 
(e.g., the model-implied IRF or a long-run simulation using the same identification and shock normalization).
\For{$b=1,\ldots,B$}
\State Draw $\tilde \eta_t^{\ast(b)}$ iid from the empirical distribution of $\{\tilde \eta_t\}_{t=1}^T$.
\State Simulate $y_t^{\ast(b)}=\sum_{j=1}^{\hat p}\widehat\phi_j y_{t-j}^{\ast(b)}+\tilde \eta_t^{\ast(b)}$ (discard burn-in).
\State Re-estimate on $\{y_t^{\ast(b)}\}_{t=1}^T$ to obtain $\widehat\theta_{\mathrm{LP},h}^{\ast(b)}$ and $\widehat\theta_{\mathrm{VAR},h}^{\ast(b)}$ for all $h$.
\EndFor
\For{each $h$}
\State $\bar\theta_{\mathrm{LP},h}^{\ast}=B^{-1}\sum_{b=1}^B \widehat\theta_{\mathrm{LP},h}^{\ast(b)}$, \quad
$\bar\theta_{\mathrm{VAR},h}^{\ast}=B^{-1}\sum_{b=1}^B \widehat\theta_{\mathrm{VAR},h}^{\ast(b)}$.
\State $\widehat V_{\mathrm{LP},h}^{\,SV}=B^{-1}\sum_{b=1}^B(\widehat\theta_{\mathrm{LP},h}^{\ast(b)}-\bar\theta_{\mathrm{LP},h}^{\ast})^2$,
\quad $\widehat V_{\mathrm{VAR},h}^{\,SV}=B^{-1}\sum_{b=1}^B(\widehat\theta_{\mathrm{VAR},h}^{\ast(b)}-\bar\theta_{\mathrm{VAR},h}^{\ast})^2$.
\State $\widehat C_h^{\,SV}=B^{-1}\sum_{b=1}^B(\widehat\theta_{\mathrm{LP},h}^{\ast(b)}-\bar\theta_{\mathrm{LP},h}^{\ast})(\widehat\theta_{\mathrm{VAR},h}^{\ast(b)}-\bar\theta_{\mathrm{VAR},h}^{\ast})$.
\State $\widehat b_{\mathrm{LP},h}^{\,SV}=\bar\theta_{\mathrm{LP},h}^{\ast}-\tilde\theta_h^{\,sieve}$, \quad
$\widehat b_{\mathrm{VAR},h}^{\,SV}=\bar\theta_{\mathrm{VAR},h}^{\ast}-\tilde\theta_h^{\,sieve}$.
\State $\widehat a_h=\widehat V_{\mathrm{LP},h}^{\,SV}+(\widehat b_{\mathrm{LP},h}^{\,SV})^2$, \quad
$\widehat d_h=\widehat V_{\mathrm{VAR},h}^{\,SV}+(\widehat b_{\mathrm{VAR},h}^{\,SV})^2$, \quad
$\widehat f_h=\widehat C_h^{\,SV}+\widehat b_{\mathrm{LP},h}^{\,SV}\widehat b_{\mathrm{VAR},h}^{\,SV}$.
\State $\widehat w_h=(\widehat d_h-\widehat f_h)/(\widehat a_h+\widehat d_h-2\widehat f_h)$; clip to $[0,1]$ (with safeguards if the denominator is small).
\EndFor
\end{algorithmic}
\end{algorithm}

The bootstrap bias terms in Algorithm~\ref{alg:SV-weight} are used for finite-sample implementation of the MSE criterion. They are not required for the benchmark first-order asymptotic theory in Section~\ref{sec:asy_theory}. In the benchmark setting, both LP and VAR are asymptotically centered at \(\theta_h\), so the limiting risk is variance-covariance based and the bias contributions are first-order negligible.  We therefore treat the bootstrap bias terms as a
\emph{finite-sample refinement} of a variance-based limit rather than as
objects that the first-order theory is required to validate: they capture
finite-sample features that disappear in the first-order limit but matter
empirically.   A higher-order justification of these terms would require
additional smoothness and moment conditions on the AR-sieve approximation,
which we do not impose.\footnote{A formal asymptotic justification for the bias terms used in
Algorithm~\ref{alg:SV-weight} lies outside the variance-based limit considered here; the
contemporaneous and complementary work of \citet{NemtyrevBoldea2026}
develops such a justification in a local-to-VAR drifting framework.} Accordingly, the formal theory should be interpreted as a benchmark first-order justification for the plug-in averaging procedure, while the bias-adjusted bootstrap weights used in the simulations and empirical application are finite-sample refinements evaluated through Monte Carlo evidence.

The weight choice is a finite-sample risk problem. For fixed $T$, the LP and VAR estimators may exhibit nonzero bias, so the MSE depends on both variance and bias components. In Section~\ref{sec:asy_theory}, we derive our asymptotic results as $T\to\infty$. Working under the assumption of a short-memory linear process where the LP and VAR estimators are both consistent, these bias terms become asymptotically negligible, and the first-order limit distribution is centered at $\theta_h$. Within this same section, we subsequently establish the consistency of the bootstrap estimator, $\widehat{w}$, and the limiting theory for $\widehat{\theta}(\widehat{w})$. Note that our proposed estimator can be easily extended to $K$ estimators (see Remark \ref{remark:multivariate}).

\begin{remark}\label{remark:multivariate}
The estimator-averaging framework extends directly to more than two IRF estimators. Suppose $\widehat{\boldsymbol\theta}_h=(\hat\theta_h^{(1)},\ldots,\hat\theta_h^{(K)})^\top$ collects $K$ estimators of the same scalar IRF $\theta_h$, and define
$\boldsymbol e_h=\widehat{\boldsymbol\theta}_h-\theta_h\mathbf{1}_K$,
$\mathbf{b}_h=\mathbb{E}[\boldsymbol e_h]$,
$\boldsymbol\Sigma_h=\mathrm{Var}(\boldsymbol e_h).$
For weights $\mathbf{w}_h\in\mathbb{R}^K$ satisfying $\mathbf{1}_K^\top\mathbf{w}_h=1$, consider the averaged estimator
$\tilde\theta_h(\mathbf{w}_h)=\mathbf{w}_h^\top \widehat{\boldsymbol\theta}_h.$ Its MSE is $\mathrm{MSE}[\tilde\theta_h]=\mathbf{w}_h^\top
\big(\boldsymbol\Sigma_h+\mathbf{b}_h\mathbf{b}_h^\top\big)
\mathbf{w}_h.$ Hence, minimizing MSE subject to $\mathbf{1}_K^\top\mathbf{w}_h=1$ yields the oracle weight vector
\begin{equation}
\label{eq:wstarK}
\mathbf{w}_h^\star
=
\frac{\mathbf{G}_h^{-1}\mathbf{1}_K}
{\mathbf{1}_K^\top \mathbf{G}_h^{-1}\mathbf{1}_K},
\quad
\mathbf{G}_h=\boldsymbol\Sigma_h+\mathbf{b}_h\mathbf{b}_h^\top.
\end{equation}
Thus, the same logic applies to averaging across multiple LP- and VAR-based IRF estimators. As before, $\mathbf{w}_h^\star$ depends on unknown population quantities and must be estimated in practice, for example by bootstrap methods.
\end{remark} 

 \begin{remark} \label{remark:IV}
The averaging framework is not specific to OLS. It can combine any two estimators of the same structural impulse response, including IV-based estimators. Under the usual relevance and exogeneity conditions for a valid external instrument, one may use
$\widehat\theta_h(w_h)=w_h\widehat\theta_{\mathrm{LP},h}^{IV}+(1-w_h)\widehat\theta_{\mathrm{VAR},h}^{IV}$ where the VAR component may be a proxy-SVAR or IV-VAR estimator identified with the same instrument. The MSE formula in \eqref{eq:msequad} and the bootstrap implementation apply analogously, with all bias, variance, and covariance terms computed for the corresponding IV estimators.\footnote{Our formal theory covers generic root-$T$ consistent LP and VAR impulse-response estimators. Weak, many, or invalid instruments are beyond the scope of the paper. In the empirical application, we use the Bauer--Swanson high-frequency surprise as the external instrument and apply the same averaging procedure to the resulting IV-LP and proxy-SVAR estimates.}
\end{remark}

\begin{remark} 
Following \citet{HounyoJung2025}, one can adopt a two–stage estimator-averaging scheme: (i) first, average within the LP–based estimators and within the VAR–based estimators separately; (ii) second, average the two class-specific aggregates (LP–avg and VAR–avg) using the same MSE–minimizing weighting rule. The second-stage weight leverages the complementary strengths of LP (lower bias at short horizons) and VAR (lower variance at longer horizons) to reduce the combined estimator’s risk.
\end{remark}

As a complementary implementation, we also consider empirically optimal weights that are chosen by directly minimizing the bootstrap MSE of the combined estimator, rather than by first estimating the bias, variance, and covariance components entering $w_h^{\star}$. Concretely, for each bootstrap resample we re-estimate the LP and VAR impulse responses and numerically search over weights in $[0,1]$ for the value that minimizes the bootstrap MSE of the combined estimator.

We also study a more flexible specification in which the weight depends on the discrepancy between the two estimators:
\begin{equation}
\label{eq:empirical_optimal}
    w_h^{EO}=\frac{a}{1+b\left(\frac{\widehat{\theta}_{\mathrm{LP},h}-\widehat{\theta}_{\mathrm{VAR},h}}{\widehat{\theta}_{\mathrm{LP},h}+\widehat{\theta}_{\mathrm{VAR},h}}\right)^2},
\end{equation}
where $a$ and $b$ are non-negative parameters chosen numerically. The normalization by the sum ensures scale invariance, and the case $b=0$ nests the constant-weight specification. Intuitively, larger discrepancies between LP and VAR lead this scheme to put less weight on LP, reflecting its typically higher variance. We refer to this extension as \emph{flexible estimator averaging}. In both cases, implementation relies on the same semiparametric AR-sieve-bootstrap used for the plug-in method. Since these procedures are primarily computational alternatives and do not form the basis of our asymptotic theory, we treat them as complementary rather than central to the paper's main contribution.

\subsubsection{Shrinkage with Targeted Local Projections}
\label{subsec:TLP}
In this subsection, we review TLP as a closely related benchmark shrinkage estimator. 
\citet{NemtyrevBoldea2026} derive the estimator and establish its asymptotic properties under the local-to-VAR framework of \citet{MontielOleaEtAl2026}. In that setting, the VAR/SVAR approximation is locally misspecified: the VAR/SVAR estimator may carry an asymptotic bias, while LP remains robust. TLP addresses this asymptotic bias--variance trade-off by shrinking the LP-based IRF estimator toward its VAR/SVAR counterpart, with a horizon-specific shrinkage weight selected through an Unbiased Risk Estimator (URE) for asymptotic MSE. Thus, unlike our symmetric estimator-averaging framework, TLP treats LP as the robust baseline and VAR/SVAR as the shrinkage target under local misspecification.

To be more specific, \citet{NemtyrevBoldea2026} define the TLP estimator $\widehat{\theta}^{TLP}_h$ as the solution to a penalized LP problem at each horizon:
\begin{align}
    \widehat{\theta}^{TLP}_h
    =
    \operatorname*{arg\,min}_{\theta_h\in\Theta}
    \left(\boldsymbol{Y}^{P}_{h}-\boldsymbol{X}^{P}\theta_h\right)^{\top}
    \left(\boldsymbol{Y}^{P}_{h}-\boldsymbol{X}^{P}\theta_h\right)
    +
    \lambda_{h}\left(\theta_{h}-\widehat{\theta}_{\mathrm{VAR},h}\right)^2 .
    \label{eq:TLP_penalized}
\end{align}
Here $\boldsymbol{Y}^{P}_{h}=M_Z\boldsymbol{Y}_{h}$ and $\boldsymbol{X}^{P}=M_Z\boldsymbol{X}$ are the residualized outcome and impulse variable after partialling out the controls, with $M_Z=I-\boldsymbol{Z}(\boldsymbol{Z}^{\top}\boldsymbol{Z})^{-1}\boldsymbol{Z}^{\top}$ denoting the annihilator matrix for the control matrix $\boldsymbol{Z}$. The vectors $\boldsymbol{Y}_{h}$ and $\boldsymbol{X}$ stack $y_{t+h}$ and $x_t$, respectively, over the estimation sample, while $\lambda_h>0$ controls the degree of shrinkage toward the VAR/SVAR target $\widehat{\theta}_{\mathrm{VAR},h}$.\footnote{Equivalently, in the scalar case, the penalized problem can be written as 
$\widehat{\theta}^{TLP}_h=\omega_h^{TLP}\widehat{\theta}_{\mathrm{LP},h}+(1-\omega_h^{TLP})\widehat{\theta}_{\mathrm{VAR},h}$, where 
$\omega_h^{TLP}=(\boldsymbol{X}^{P\top}\boldsymbol{X}^{P})/(\boldsymbol{X}^{P\top}\boldsymbol{X}^{P}+\lambda_h)$. Thus, larger $\lambda_h$ implies stronger shrinkage toward the VAR/SVAR target.} The TLP estimator therefore depends on the shrinkage parameter $\lambda_h$, which governs the relative weight placed on the LP estimate and the VAR/SVAR target at each horizon. \citet{NemtyrevBoldea2026} select the shrinkage parameter at each horizon by minimizing a consistent estimator of the \emph{asymptotic} MSE.

Our MSE decomposition in \eqref{eq:msequad} is closely related to the risk
criterion in \citet{NemtyrevBoldea2026}. Both papers essentially minimize the MSE of a linear combination of LP and VAR/SVAR impulse-response estimators, and both deliver weights that are in the form of a  ratio in which the numerator captures a variance/covariance gap and the denominator adds a bias-squared term.
Beyond this shared algebraic structure, however, the two papers are different both theoretically and methodologically, and we view the contributions as complementary rather than competing. \citet{NemtyrevBoldea2026} adopt the local-to-VAR DGP of \citet{MontielOleaEtAl2026} in which the misspecification of the VAR shrinks at rate $T^{-\zeta}$ for some
$\zeta>1/4$. Under this drifting sequence, the VAR carries an
$O(T^{-\zeta})$ asymptotic bias that is first-order relevant, and the
weight is determined by a bias--variance trade-off in the asymptotic limit.
We instead work under a fixed short-memory Wold DGP in which both LP and
VAR are root-$T$ consistent for the same population impulse response.
The bias--variance trade-off in our framework is therefore a
\emph{finite-sample} phenomenon, not an asymptotic one, and the limiting
oracle weight is a pure variance/covariance ratio. Given their setup, \citet{NemtyrevBoldea2026} interpret the estimator as shrinkage of LP toward VAR, with VAR playing the role of a regularization target. We instead adopt a symmetric estimator-averaging perspective, treating LP and VAR as two estimators of the same scalar IRF and choosing weights to minimize the MSE of the combined estimator. This perspective also extends naturally to more than two estimators, as discussed in Remark~\ref{remark:multivariate}. 

\subsubsection[Model Averaging with R2-Based Weights]{Model Averaging with $R^2$-Based Weights}

In this subsection, we briefly review \emph{model averaging}, focusing on achieving the best in--sample \emph{fit}. In the spirit of \citet{HounyoJung2025}, we choose the weight between LP- and VAR-based IRF estimates using their respective in-sample coefficients of determination, $R^2$.

First, we calculate the \(R^2\) for the LP regression at each horizon \(h\), denoted as
\(R^2_{\mathrm{LP},h}\). Because the LP method estimates a separate regression for each horizon, this
measure of fit is horizon-specific. Second, we calculate the \(R^2\) for the VAR model.
In the univariate case, this is the in-sample \(R^2\) from the fitted VAR equation.
In the multivariate case, we use the in-sample \(R^2\) from the reduced-form VAR equation
corresponding to the response variable whose impulse response is being averaged. Since the VAR is estimated once on the fixed sample, this measure is horizon-invariant
and is denoted by \(R^2_{\mathrm{VAR}}\). Using these measures of in-sample fit, we assign the weight to the LP estimator for a given
horizon \(h\) proportionally to its relative explanatory power. The model averaging weight is
constructed as:
\begin{equation}
\widehat w_h^{M}
= \frac{R^2_{\mathrm{LP},h}}{R^2_{\mathrm{LP},h} + R^2_{\mathrm{VAR}}}, \quad \widehat w_h^{M} \in [0,1].
\label{eq:wr2}
\end{equation}

We then obtain the model-averaged IRF at horizon \(h\) as $\widehat \theta_h^{M}
= \widehat w_h^{M} \, \widehat \theta_{\mathrm{LP},h}
+ \bigl(1 - \widehat w_h^{M}\bigr) \widehat \theta_{\mathrm{VAR},h}.$

\section{Asymptotic Results}
\label{sec:asy_theory}
Since the averaged estimator relies on a bootstrap estimate of \(w_h^\star\), this section establishes the asymptotic properties of both \(\widehat w_h\) and \(\hat\theta_h(\widehat w_h)\). We develop the asymptotic theory for $\widehat w_h$ and $\widehat\theta_h(\widehat w_h)$ by focusing on the baseline environment where both the LP and VAR estimators are consistent. By analyzing the AR-sieve-bootstrap min-MSE procedure under a short-memory linear DGP in this benchmark setting, we provide the rigorous theoretical justification that anchors our methodology. We impose standard regularity conditions below.

Although the estimators in Section~\ref{sec:estimator} are written as finite-lag LP and VAR regressions, the benchmark asymptotic theory does not assume that the true DGP is a fixed finite-order VAR. Instead, the true process is modeled as a short-memory Wold process. Equivalently, under standard invertibility and summability conditions, this process admits an infinite-order autoregressive representation that can be approximated by a growing-order AR/VAR sieve. The finite-lag LP and VAR specifications used in estimation should therefore be viewed as feasible approximations to this short-memory benchmark, with the sieve order increasing sufficiently slowly relative to the sample size.
\begin{assumption}\label{ass:Wold1}
The vector process \(\{Y_t\}\) is covariance-stationary and purely
nondeterministic, and admits the Wold representation
\[
Y_t=\sum_{j=0}^{\infty}\Psi_j\varepsilon_{t-j},
\qquad
E[\varepsilon_t\mid \mathcal F_{t-1}]=0,
\qquad
E[\varepsilon_t\varepsilon_t']=\Sigma_\varepsilon\succ 0.
\]
Assume short memory, \(\sum_{j=0}^{\infty} j\|\Psi_j\|<\infty\), and
\(E\|\varepsilon_t\|^{4+\delta}<\infty\) for some \(\delta>0\).
\end{assumption}

\begin{assumption}\label{ass:VARb2}
For any fixed $h$, there exists a (possibly singular) $2\times 2$ matrix $\Omega_h^{(2)}$ such that
\begin{align}
\sqrt{T}\Big((\widehat\theta_{\mathrm{LP},h},\widehat\theta_{\mathrm{VAR},h})'-(\theta_h,\theta_h)'\Big)
\ \Rightarrow\ \mathcal N(0,\Omega_h^{(2)}). \label{eq:jointCLT}
\end{align}
In addition, for some $\delta>0$, $\sup_T \mathbb{E}\|Z_{T,h}\|^{2+\delta}<\infty$, where $Z_{T,h}$ is defined in
Assumption~\ref{ass:VARb3}.
\end{assumption}

\begin{assumption}\label{ass:VARb3}
For any fixed $h$, define
\begin{align}
Z_{T,h}
=
\sqrt{T}\Big((\widehat\theta_{\mathrm{LP},h},\widehat\theta_{\mathrm{VAR},h})'
-(\theta_h,\theta_h)'\Big),
\label{eq:ZTh}
\end{align}
and let $(\widehat\theta_{\mathrm{LP},h}^\ast,\widehat\theta_{\mathrm{VAR},h}^\ast)'$ be generated by the AR-sieve bootstrap in Algorithm~\ref{alg:SV-weight}. Let $\mathbb{E}^\ast[\cdot]$ denote expectation conditional on the data, and define the centered bootstrap analogue
\begin{align}
Z_{T,h}^\ast
=
\sqrt{T}\Big(
(\widehat\theta_{\mathrm{LP},h}^\ast,\widehat\theta_{\mathrm{VAR},h}^\ast)'
-
\mathbb{E}^\ast[(\widehat\theta_{\mathrm{LP},h}^\ast,\widehat\theta_{\mathrm{VAR},h}^\ast)']
\Big).
\label{eq:ZThstar}
\end{align}

\smallskip
\noindent (i) 
The AR-sieve bootstrap consistently estimates the limiting second moments of the centered and scaled LP--VAR estimator pair:
\begin{align}
\mathbb{E}^\ast[Z_{T,h}^\ast Z_{T,h}^{\ast\prime}]
\xrightarrow{p}
\Omega_h^{(2)}.
\label{eq:boot_second_moment}
\end{align}

\smallskip
\noindent (ii) 
For some $\delta>0$,
$\mathbb{E}^\ast[\|Z_{T,h}^\ast\|^{2+\delta}]=O_p(1)$ and
$\sup_T \mathbb{E}[\|Z_{T,h}\|^{2+\delta}]<\infty$.

\smallskip
\noindent (iii) $B=B(T)\to\infty$ as $T\to\infty$.
\end{assumption}

To connect the finite-sample MSE weight in (\ref{eq:tildewstar}) to the root-\(T\)
asymptotic theory, it is useful to rescale the risk components. Define $a_{T,h}=V_{L,T,h}+b_{L,T,h}^2,\quad
d_{T,h}=V_{V,T,h}+b_{V,T,h}^2,\quad
f_{T,h}=C_{T,h}+b_{L,T,h}b_{V,T,h}$,
and let $A_{T,h}=T a_{T,h},\quad
D_{T,h}=T d_{T,h},\quad
F_{T,h}=T f_{T,h}$.
Since multiplying both the numerator and denominator of (\ref{eq:tildewstar}) by \(T\)
does not change the weight, we can write
\[
\widetilde w_{T,h}^{\star}
=
\frac{d_{T,h}-f_{T,h}}
     {a_{T,h}+d_{T,h}-2f_{T,h}}
=
\frac{D_{T,h}-F_{T,h}}
     {A_{T,h}+D_{T,h}-2F_{T,h}}.
\]
In the benchmark case in which both the LP and VAR estimators are
root-\(T\) consistent and asymptotically centered at \(\theta_h\), these
scaled risk components converge to the corresponding entries of
\(\Omega_h^{(2)}\).

\begin{assumption}\label{ass:rate-nondeg}
For any fixed \(h\), the scaled oracle risk components satisfy
$(A_{T,h},D_{T,h},F_{T,h}) \to (A_h,D_h,F_h)$,
where $A_h+D_h-2F_h \ge c>0$ for some constant \(c>0\). Moreover, the limiting oracle weight, $\widetilde w_h^\star =
\frac{D_h-F_h}{A_h+D_h-2F_h}$,
lies in the interior of \([0,1]\): $\widetilde w_h^\star\in[\underline w,1-\underline w]$ for some $\underline w\in(0,1/2)$. Hence the limiting clipped weight satisfies $w_h^\star=\widetilde w_h^\star$.
\end{assumption}

Assumption~\ref{ass:Wold1} assumes a short-memory Wold DGP. Under Assumptions~\ref{ass:Wold1}--\ref{ass:VARb2} and routine regularity conditions---including nonsingularity of the relevant projection matrices and, for sieve-based VAR/LP approximations, a lag order \(p_T\) that grows sufficiently slowly with \(T\)---the LP and VAR IRF estimators are consistent and jointly asymptotically normal for each fixed horizon \(h\). Moreover, AR-sieve-bootstrap approximations are valid for broad Wold-type processes for statistics whose limits depend on second-order structure. Finally, LP and VAR target the same impulse responses asymptotically when the lag length increases (\citealp{PlagborgMollerWolf2021}).

Assumption~\ref{ass:VARb2} postulates joint root-$T$ asymptotic normality for the LP and VAR IRF estimators at a fixed horizon $h$. This condition is standard for OLS-based LP and VAR estimators under Assumption~\ref{ass:Wold1} and routine rank and weak-dependence conditions; for VAR IRFs, it follows by applying the delta method to the smooth mapping from the VAR coefficients to the horizon-$h$ impulse response. Writing $\Omega_h^{(2)}=\begin{pmatrix}\Omega_{11,h}&\Omega_{12,h}\\ \Omega_{12,h}&\Omega_{22,h}\end{pmatrix}$ as in \eqref{eq:jointCLT}, the moment condition in Assumption~\ref{ass:VARb2} implies convergence of the corresponding scaled second moments: $T \operatorname{Var}(\hat\theta_{\mathrm{LP},h})\to\Omega_{11,h}$, $T \operatorname{Var}(\hat\theta_{\mathrm{VAR},h})\to\Omega_{22,h}$, and $T \operatorname{Cov}(\hat\theta_{\mathrm{LP},h},\hat\theta_{\mathrm{VAR},h})\to\Omega_{12,h}$. Hence, in the benchmark centered root-$T$ setting, the limiting risk of the averaged estimator is $e(w)'\Omega_h^{(2)}e(w)$, where $e(w)=(w,1-w)'$. The corresponding infeasible unconstrained limiting oracle weight is
\begin{align}
\widetilde w_h^\star
=
\frac{\Omega_{22,h}-\Omega_{12,h}}
{\Omega_{11,h}+\Omega_{22,h}-2\Omega_{12,h}}.
\label{eq:wstarOmega}
\end{align}
Under Assumption~\ref{ass:rate-nondeg}, this interior solution coincides with the limiting clipped oracle weight $w_h^\star$.

Assumption~\ref{ass:VARb3} is a high-level condition requiring the AR-sieve bootstrap to consistently estimate the limiting second moments of the centered and scaled LP--VAR estimator pair at a fixed horizon $h$. We state the condition in terms of centered bootstrap second moments rather than full distributional convergence, because the MSE weight and the first-order theory only require the variance and covariance components. This formulation also avoids irrelevant shifts induced by the difference between the sieve pseudo-truth and the original-sample estimates. The moment condition in part~(ii) justifies convergence of the quadratic functionals used to construct the MSE weight. The AR-sieve bootstrap is designed for short-memory linear processes and may be less reliable under strong nonlinearity, structural breaks, unit roots or near-unit roots, long memory, heavy tails, or pronounced conditional heteroskedasticity; in the latter case, a wild or heteroskedasticity-robust bootstrap variant may be more appropriate. Assumptions~\ref{ass:VARb2} and~\ref{ass:VARb3} are stated as high-level conditions. They can be derived under standard primitive regularity conditions for stable finite-order VARs and, more generally, for short-memory linear processes approximated by a suitably growing AR sieve. A formal treatment of data-dependent lag selection rules, such as BIC or AIC, is beyond the scope of the paper. We therefore treat these lag choices as part of the feasible implementation and evaluate their practical performance through the Monte Carlo experiments. 

Assumption~\ref{ass:rate-nondeg} is a nondegeneracy condition for the scaled asymptotic risk problem. Under root-$T$ asymptotics, the unscaled denominator in the interior weight formula \eqref{eq:tildewstar} is typically of order $T^{-1}$; hence the relevant object is the scaled denominator $A_{T,h}+D_{T,h}-2F_{T,h}$. Requiring this term to be bounded away from zero makes the limiting interior risk problem locally strictly convex and the weight map locally smooth. Together with the interiority condition $\widetilde w_h^\star\in[\underline w,1-\underline w]$, this ensures that the clipped oracle weight $w_h^\star$ is uniquely well defined and that clipping is asymptotically inactive, so $w_h^\star=\widetilde w_h^\star$ in the limit. The assumption excludes boundary cases, such as correctly specified designs in which the efficient VAR receives essentially all the weight, as well as cases where LP and VAR become asymptotically equivalent and $A_h+D_h-2F_h$ vanishes. These cases are not covered by the smooth first-order expansion used below, but the clipped finite-sample procedure remains operationally well defined. Algorithm~\ref{alg:SV-weight} includes a numerical safeguard for small estimated denominators. Moreover, when LP and VAR are nearly equivalent, the weight may be poorly identified, but the averaged estimator remains well behaved because any convex combination of two nearly coincident estimators delivers nearly the same value.

For the benchmark first-order theory, $\widehat w_h$ denotes the plug-in weight computed from the scaled variance-covariance components $(\widehat A_{T,h},\widehat D_{T,h},\widehat F_{T,h})$; the finite-sample implementation in Algorithm~\ref{alg:SV-weight} additionally includes bootstrap bias terms.

\begin{theorem}[Consistency and rate of AR-sieve-bootstrap weights]\label{thm:SV-weights-proof}
For any fixed \(h\), under Assumptions~\ref{ass:VARb3} and \ref{ass:rate-nondeg},
$\widehat w_h \xrightarrow{p} w_h^\star$.
If, in addition,
\begin{align}
\left\|
(\widehat A_{T,h},\widehat D_{T,h},\widehat F_{T,h})
-
(A_h,D_h,F_h)
\right\|
=
O_p(r_T)+O_p(B^{-1/2}),
\label{eq:scaled-risk-rate-thm}
\end{align}
for some deterministic sequence \(r_T\to0\), then
$|\widehat w_h-w_h^\star|=O_p(r_T)+O_p(B^{-1/2})$.
\end{theorem}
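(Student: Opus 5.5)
The argument is a continuous-mapping step followed by a delta-method step. Both are applied to the map
\[
g(A,D,F)=\frac{D-F}{A+D-2F},
\qquad \text{composed with the clip } \Pi(x)=\min\{1,\max\{0,x\}\}.
\]
Write the plug-in components as $\widehat A_{T,h}=\widehat V_{L}^{SV}\cdot T$, and define $\widehat D_{T,h}$ and $\widehat F_{T,h}$ analogously from the bootstrap variance and covariance with $B$ draws. By construction $\widehat w_h=\Pi(g(\widehat A_{T,h},\widehat D_{T,h},\widehat F_{T,h}))$. By Assumption~\ref{ass:rate-nondeg}, $w_h^\star=\Pi(g(A_h,D_h,F_h))=g(A_h,D_h,F_h)$.

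The first step is to show that $(\widehat A_{T,h},\widehat D_{T,h},\widehat F_{T,h})\xrightarrow{p}(A_h,D_h,F_h)$. I would split the error into a Monte Carlo part and a bootstrap-population part.

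For the Monte Carlo part, conditional on the data the $B$ draws $Z^{\ast(b)}_{T,h}$ are i.i.d. The sample second-moment matrix $B^{-1}\sum_b (Z^{\ast(b)}-\bar Z^\ast)(Z^{\ast(b)}-\bar Z^\ast)'$ then differs from $\mathbb E^\ast[Z^\ast_{T,h}Z^{\ast\prime}_{T,h}]$ by $o_p(1)$ as $B\to\infty$ (Assumption~\ref{ass:VARb3}(iii)). This uses a conditional weak law of large numbers for triangular arrays. Uniform integrability comes from the $2+\delta$ moment bound $\mathbb E^\ast\|Z^\ast\|^{2+\delta}=O_p(1)$ in Assumption~\ref{ass:VARb3}(ii), via a conditional Markov/truncation argument. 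The demeaning by $\bar Z^\ast$ contributes only $\|\bar Z^\ast-\mathbb E^\ast Z^\ast\|^2=O_p(B^{-1})$.

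For the bootstrap-population part, $\mathbb E^\ast[Z^\ast Z^{\ast\prime}]\xrightarrow{p}\Omega_h^{(2)}$ by Assumption~\ref{ass:VARb3}(i).

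It remains to identify $(A_h,D_h,F_h)$ with $(\Omega_{11,h},\Omega_{22,h},\Omega_{12,h})$. I would use the moment bound $\sup_T\mathbb E\|Z_{T,h}\|^{2+\delta}<\infty$ together with Assumption~\ref{ass:VARb2}. Uniform integrability then gives $\mathbb E[Z_{T,h}Z_{T,h}']\to\Omega_h^{(2)}$. Since $A_{T,h},D_{T,h},F_{T,h}$ are exactly the entries of $\mathbb E[Z_{T,h}Z_{T,h}']$ (variance plus $T$ times the squared bias, etc.), their limits coincide with the entries of $\Omega_h^{(2)}$ by uniqueness of limits.

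This identification is the step I expect to be the most delicate. It requires the scaled bias $\sqrt T b$ to be handled consistently: the uncentered second moment contains it, while the bootstrap object is centered. I would handle this by noting that uniform integrability plus the centered normal limit forces $\mathbb E Z_{T,h}\to 0$, so the scaled bias terms vanish. If that route stalls, I would simply take $(A_h,D_h,F_h)$ to be defined as the limit of $\mathbb E[Z_{T,h}Z_{T,h}']$, as stated in Assumption~\ref{ass:rate-nondeg}.

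The second step is continuity. Assumption~\ref{ass:rate-nondeg} gives $A_h+D_h-2F_h\ge c>0$, so $g$ is continuous, indeed $C^\infty$, on a neighbourhood $\mathcal N$ where the denominator exceeds $c/2$. The clip $\Pi$ is $1$-Lipschitz. The continuous mapping theorem then yields $\widehat w_h\xrightarrow{p}w_h^\star$.

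For the rate, on the event that the estimates lie in $\mathcal N$, whose probability tends to one, the mean value theorem gives
\[
|\widehat w_h-w_h^\star|\le |g(\widehat A,\widehat D,\widehat F)-g(A_h,D_h,F_h)|\le L\,\|(\widehat A,\widehat D,\widehat F)-(A_h,D_h,F_h)\|.
\]
Here $L=\sup_{\mathcal N}\|\nabla g\|<\infty$, taking $\mathcal N$ to be bounded. An explicit form of the gradient is $\nabla g=(A+D-2F)^{-2}\,(-(D-F),\,A-F,\,D-A)$, which is bounded on $\mathcal N$. Combining this with \eqref{eq:scaled-risk-rate-thm} delivers $|\widehat w_h-w_h^\star|=O_p(r_T)+O_p(B^{-1/2})$, since the complement event has vanishing probability and so does not affect $O_p$ statements.

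The interiority part of Assumption~\ref{ass:rate-nondeg} is not needed for this bound, because $\Pi$ is Lipschitz. It only guarantees that the clipping is asymptotically inactive, so that $w_h^\star=\widetilde w_h^\star$.
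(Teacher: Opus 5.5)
Your proposal is correct and follows essentially the same route as the paper: consistency of the scaled bootstrap components (the paper's Lemma~\ref{lem:SV-risk-proof}, via Assumption~\ref{ass:VARb3}(i)--(ii) and a conditional LLN as $B\to\infty$), then continuity of $G$ near $(A_h,D_h,F_h)$ with the denominator bounded away from zero and the continuous mapping theorem, and finally a mean-value bound for the rate. Your identification of $(A_{T,h},D_{T,h},F_{T,h})$ as exactly the entries of $\mathbb E[Z_{T,h}Z_{T,h}']$, combined with uniform integrability to get convergence to $\Omega_h^{(2)}$, fills in a step the paper only asserts. Your fallback of merely defining $(A_h,D_h,F_h)$ as a limit would not suffice on its own, because the bootstrap components converge to $\Omega_h^{(2)}$; your primary argument is the one that closes the proof.
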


Let $\Omega$ be a $2\times 2$ covariance matrix.
Define
\begin{align}
\label{eq:V_h}
V_h(w,\Omega) =
 w^2\Omega_{11} + (1-w)^2\Omega_{22} + 2w(1-w)\Omega_{12}.
\end{align}

\begin{theorem}[Consistency and asymptotic normality of $\widehat\theta_h(\widehat w_h)$]\label{thm:theta-avg-asyn}
For any fixed $h$, under Assumptions~\ref{ass:VARb2}--\ref{ass:rate-nondeg},
$\widehat\theta_h(\widehat w_h)\xrightarrow{p}\theta_h$, and
\begin{align}
\sqrt{T}\big(\widehat\theta_h(\widehat w_h)-\theta_h\big)
\Rightarrow \mathcal N\!\Big(0,\, V_h(w_h^\star,\Omega_h^{(2)})\Big), \label{eq:asyn-normal-avg}
\end{align}
where $V_h(w_h^\star,\Omega_h^{(2)})$ is defined by equation (\ref{eq:V_h}) with $w=w_h^\star$ and $\Omega=\Omega_h^{(2)}$.
\end{theorem}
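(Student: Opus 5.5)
The plan is a Slutsky-type decomposition that isolates the randomness of the estimated weight. Write
\[
\sqrt{T}\big(\widehat\theta_h(\widehat w_h)-\theta_h\big)
= e(\widehat w_h)'Z_{T,h},
\qquad e(w)=(w,1-w)',
\]
which holds because $\widehat w_h+(1-\widehat w_h)=1$, so the common centering $\theta_h$ passes through the convex combination. This identity is exact for any weight. It therefore does not matter that $\widehat w_h$ is computed from the same data (via the bootstrap) as $Z_{T,h}$. All that is needed is $\widehat w_h\xrightarrow{p} w_h^\star$, which is the first claim of Theorem~\ref{thm:SV-weights-proof}. Everything then follows from the joint CLT in Assumption~\ref{ass:VARb2}, the continuous mapping theorem and Slutsky's lemma.

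The steps are as follows.

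\textbf{Step 1.} Invoke Theorem~\ref{thm:SV-weights-proof}, whose hypotheses (Assumptions~\ref{ass:VARb3} and~\ref{ass:rate-nondeg}) are included here, to get $\widehat w_h\xrightarrow{p}w_h^\star$. This also gives $e(\widehat w_h)\xrightarrow{p}e(w_h^\star)$, a deterministic vector.

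\textbf{Step 2.} Split the scaled error as
\[
e(\widehat w_h)'Z_{T,h}=e(w_h^\star)'Z_{T,h}+\big(e(\widehat w_h)-e(w_h^\star)\big)'Z_{T,h}.
\]
\begin{itemize}
\item By Assumption~\ref{ass:VARb2}, $Z_{T,h}\Rightarrow\mathcal N(0,\Omega_h^{(2)})$, so $Z_{T,h}=O_p(1)$.
\item Since $\|e(\widehat w_h)-e(w_h^\star)\|\le\sqrt2\,|\widehat w_h-w_h^\star|=o_p(1)$, the second term is $o_p(1)\cdot O_p(1)=o_p(1)$.
\item The first term converges, by the continuous mapping theorem applied to the fixed linear functional $z\mapsto e(w_h^\star)'z$, to $\mathcal N\big(0,e(w_h^\star)'\Omega_h^{(2)}e(w_h^\star)\big)$.
\end{itemize}
Expanding the quadratic form gives
\[
e(w)'\Omega e(w)=w^2\Omega_{11}+(1-w)^2\Omega_{22}+2w(1-w)\Omega_{12}=V_h(w,\Omega),
\]
which is exactly \eqref{eq:V_h}. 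Slutsky's lemma then yields \eqref{eq:asyn-normal-avg}. This holds even if $\Omega_h^{(2)}$ is singular, in which case the limit is a possibly degenerate normal.

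\textbf{Step 3.} Consistency follows by dividing by $\sqrt T$: $\widehat\theta_h(\widehat w_h)-\theta_h=T^{-1/2}O_p(1)=o_p(1)$. Alternatively, $\widehat\theta_{\mathrm{LP},h}$ and $\widehat\theta_{\mathrm{VAR},h}$ are each consistent and $\widehat w_h\in[0,1]$ is bounded, so the combination is consistent directly.

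\textbf{Main obstacle.} There is no real analytical difficulty once Theorem~\ref{thm:SV-weights-proof} is available. The only delicate point is the one the exact identity handles: $\widehat w_h$ is data-dependent and correlated with $Z_{T,h}$, so one must not condition on it or treat it as independent. Because $\widehat w_h$ converges to a constant, the dependence is irrelevant at first order, and the error term is a product of an $o_p(1)$ and an $O_p(1)$ factor. I would also note that the moment conditions are not needed for this distributional statement. They matter only through Theorem~\ref{thm:SV-weights-proof}, where they give convergence of the oracle scaled risks $(A_{T,h},D_{T,h},F_{T,h})$ to the $\Omega_h^{(2)}$ entries. That convergence identifies $w_h^\star$ with \eqref{eq:wstarOmega} and guarantees, via Assumption~\ref{ass:rate-nondeg}, that clipping is inactive in the limit.
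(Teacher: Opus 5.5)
Your proposal is correct and follows the paper's proof. Your split $e(\widehat w_h)'Z_{T,h}=e(w_h^\star)'Z_{T,h}+\big(e(\widehat w_h)-e(w_h^\star)\big)'Z_{T,h}$ is exactly the paper's decomposition, since the second term equals $(\widehat w_h-w_h^\star)\sqrt T(\widehat\theta_{\mathrm{LP},h}-\widehat\theta_{\mathrm{VAR},h})=o_p(1)\cdot O_p(1)$, and the fixed-weight term is handled by the continuous mapping theorem and Slutsky. The only cosmetic difference is that you get consistency by dividing the $\sqrt T$ result through, whereas the paper argues it directly from consistency of the two component estimators and of $\widehat w_h$.
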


Theorem~\ref{thm:SV-weights-proof} shows that the AR-sieve-bootstrap plug-in weight 
$\widehat w_h$ is consistent for the limiting oracle weight $w_h^\star$ at each fixed horizon $h$, provided that the scaled bootstrap risk components consistently estimate their limiting counterparts. The scaling is important because, under the benchmark root-$T$ asymptotics, the unscaled variance and covariance terms entering the finite-sample MSE are of order $T^{-1}$. Thus, the relevant nondegeneracy condition is imposed on the scaled denominator $A_h+D_h-2F_h$, rather than on the raw denominator $a_{T,h}+d_{T,h}-2f_{T,h}$.
If a rate statement is desired, it depends on the convergence rate of the scaled risk-component estimator, denoted by $r_T$, together with the Monte Carlo simulation error $B^{-1/2}$. We leave $r_T$ as a high-level rate because its exact value depends on the AR-sieve approximation, the lag-order sequence, and the statistic being bootstrapped. Importantly, Theorem~\ref{thm:theta-avg-asyn} only requires $\widehat w_h \overset{p}{\longrightarrow} w_h^\star$, since the plug-in weight error is then first-order negligible.
Theorem~\ref{thm:theta-avg-asyn} shows that $\widehat\theta_h(\widehat w_h)$ is consistent for $\theta_h$ and asymptotically normal at the root-$T$ rate, with asymptotic variance $V_h(w_h^\star,\Omega_h^{(2)})$. Although Theorems~\ref{thm:SV-weights-proof}-\ref{thm:theta-avg-asyn} are derived for the benchmark setting in which both estimators are root-$T$ consistent and asymptotically centered, our simulations show that our estimator performs well also under misspecification.

\begin{theorem}[Consistency and rate of plug-in asymptotic variance]\label{thm:SV-Vhat-proof}

For any fixed $h$, let $\widehat V_h = V_h(\widehat w_h,\widehat\Omega_h^{(2)})$. Suppose that $\widehat\Omega_h^{(2)}\xrightarrow{p}\Omega_h^{(2)}$. Under Assumptions~\ref{ass:VARb2}--\ref{ass:rate-nondeg}, $\widehat V_h \xrightarrow{p} V_h(w_h^\star,\Omega_h^{(2)})$.
If, in addition, $|\widehat w_h-w_h^\star| = O_p(r_T)+O_p(B^{-1/2})$ and $\|\widehat\Omega_h^{(2)}-\Omega_h^{(2)}\| = O_p(s_T)+O_p(B^{-1/2})$ for some $s_T\to0$, then $|\widehat V_h-V_h(w_h^\star,\Omega_h^{(2)})|
=
O_p(r_T+s_T)+O_p(B^{-1/2})$.
\end{theorem}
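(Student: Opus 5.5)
The proof is a continuous-mapping argument for consistency, followed by a first-order Taylor (mean-value) expansion of $V_h$ for the rate. We view $V_h(w,\Omega)=w^2\Omega_{11}+(1-w)^2\Omega_{22}+2w(1-w)\Omega_{12}$ as a polynomial, hence jointly continuous and continuously differentiable, in $(w,\Omega_{11},\Omega_{22},\Omega_{12})\in\mathbb{R}\times\mathbb{R}^3$.

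\textbf{Consistency.} Theorem~\ref{thm:SV-weights-proof}, under Assumptions~\ref{ass:VARb3} and \ref{ass:rate-nondeg}, gives $\widehat w_h\xrightarrow{p}w_h^\star$. The hypothesis of the present theorem gives $\widehat\Omega_h^{(2)}\xrightarrow{p}\Omega_h^{(2)}$, so jointly $(\widehat w_h,\widehat\Omega_h^{(2)})\xrightarrow{p}(w_h^\star,\Omega_h^{(2)})$. Since $V_h$ is continuous at the limit point, the continuous mapping theorem yields $\widehat V_h\xrightarrow{p}V_h(w_h^\star,\Omega_h^{(2)})$.

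Clipping of $\widehat w_h$ causes no difficulty, because the clipped value lies in $[0,1]$ and $w_h^\star$ is interior by Assumption~\ref{ass:rate-nondeg}. Assumption~\ref{ass:VARb2} is used only to ensure that $\Omega_h^{(2)}$ is the well-defined finite limit.

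\textbf{Rate.} We split the error into a weight term and a covariance term:
\begin{align*}
\widehat V_h-V_h(w_h^\star,\Omega_h^{(2)})
=\big[V_h(\widehat w_h,\widehat\Omega_h^{(2)})-V_h(w_h^\star,\widehat\Omega_h^{(2)})\big]
+\big[V_h(w_h^\star,\widehat\Omega_h^{(2)})-V_h(w_h^\star,\Omega_h^{(2)})\big].
\end{align*}

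The second bracket is linear in $\Omega$, with coefficients $(w_h^{\star2},(1-w_h^\star)^2,2w_h^\star(1-w_h^\star))$, all bounded by $2$. It is therefore at most $C\|\widehat\Omega_h^{(2)}-\Omega_h^{(2)}\|=O_p(s_T)+O_p(B^{-1/2})$.

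For the first bracket, the mean value theorem in $w$ gives $|\partial_w V_h(\bar w,\widehat\Omega_h^{(2)})|\,|\widehat w_h-w_h^\star|$ for some intermediate $\bar w$. Here
\begin{align*}
\partial_w V_h(w,\Omega)=2w\Omega_{11}-2(1-w)\Omega_{22}+2(1-2w)\Omega_{12},
\end{align*}
which is bounded by $6\|\Omega\|$ for $w\in[0,1]$. Since $\bar w$ lies between two points of $[0,1]$ and $\|\widehat\Omega_h^{(2)}\|=O_p(1)$ by consistency, this derivative is $O_p(1)$. The first bracket is therefore $O_p(r_T)+O_p(B^{-1/2})$.

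Adding the two brackets and absorbing the Monte Carlo terms gives $O_p(r_T+s_T)+O_p(B^{-1/2})$.

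The only step that needs care is showing that the derivative factor is stochastically bounded. Compactness of $[0,1]$ together with $\widehat\Omega_h^{(2)}=\Omega_h^{(2)}+o_p(1)$ handles this, so no step is a genuine obstacle.
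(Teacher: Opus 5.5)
Your proposal is correct and follows the paper's proof. It uses the continuous mapping theorem, with Theorem~\ref{thm:SV-weights-proof} and the assumed consistency of $\widehat\Omega_h^{(2)}$, for consistency, and a mean-value expansion of the polynomial $V_h$ for the rate. Your split into a weight term and a covariance term, with the derivative factor shown to be $O_p(1)$, makes explicit what the paper's constants $C_1,C_2$ implicitly require.
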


Theorem~\ref{thm:SV-Vhat-proof} justifies plug-in inference based on
$\widehat V_h=V_h(\widehat w_h,\widehat\Omega_h^{(2)})$. In practice, one may estimate the $2\times2$ matrix
$\Omega_h^{(2)}$ by forming the stacked vector of estimating equations for
$(\widehat\theta_{\mathrm{LP},h},\widehat\theta_{\mathrm{VAR},h})$ and applying a standard HAC estimator (e.g., Newey--West) to the
resulting score/influence-function series. Equivalently, one can estimate $\Omega_h^{(2)}$ via the same AR-sieve
bootstrap used to construct $\widehat w_h$: compute the centered bootstrap draws $\widetilde Z_{T,h}^{*(b)}=Z_{T,h}^{*(b)}-\bar Z_{T,h}^\ast$ and set $\widehat\Omega_h^{(2)}=B^{-1}\sum_{b=1}^B \widetilde Z_{T,h}^{*(b)}\widetilde Z_{T,h}^{*(b)\prime}$.

As an alternative to plug-in Wald inference, one can use an AR-sieve bootstrap to obtain pointwise confidence intervals. Fit a stable reduced-form VAR($p$) sieve to $\{\boldsymbol{Y}_t\}$ and obtain residuals $\widehat v_t$. For each bootstrap draw, form wild innovations $v_t^{\ast(b)}=\eta_t^{(b)}\widehat v_t$, where $\eta_t^{(b)}$ are i.i.d.\ Rademacher multipliers taking values in $\{-1,+1\}$, and generate
\begin{equation}
\boldsymbol{Y}_t^{\ast(b)}
=
\sum_{j=1}^p \widehat A_j \boldsymbol{Y}_{t-j}^{\ast(b)}
+
v_t^{\ast(b)}
\end{equation}
recursively. When an external instrument is used, the same multiplier is applied to the instrument innovation so as to preserve the contemporaneous covariance between the instrument and the VAR residuals. Re-estimate LP and VAR on each pseudo-sample and recompute the averaged estimator to obtain $\widehat\theta_h^{\ast(b)}=\widehat\theta_h^{\ast(b)}(\widehat w_h^{\ast(b)})$. A centered symmetric bootstrap interval is $\big[\widehat\theta_h(\widehat w_h)-\delta_{\alpha,h},\ \widehat\theta_h(\widehat w_h)+\delta_{\alpha,h}\big]$, where $\delta_{\alpha,h}$ is the empirical $(1-\alpha)$ quantile of $\big|\widehat\theta_h^{\ast(b)}-\widehat\theta_h(\widehat w_h)\big|$ across $b=1,\ldots,B$.

\section{Monte Carlo Evidence}
\label{sec:MC}

To assess the finite-sample performance of our methods, we consider both a simple univariate design and a richer multivariate design. The univariate exercise transparently illustrates the LP--VAR bias--variance trade-off and evaluates how well the feasible procedures approximate the oracle weights. The multivariate exercise then assesses the same methods in a more realistic macroeconomic environment.

The simulations cover three environments. First, we consider correctly specified benchmark designs in which the finite-order VAR approximation is exact, so both LP and VAR consistently estimate the same impulse responses asymptotically. Second, we consider local-misspecification designs in which the DGP is local to a finite-order VAR. In this setting, LP remains first-order robust, while the VAR estimator may exhibit local asymptotic bias (see, e.g., \citealp{MontielOleaEtAl2026,NemtyrevBoldea2026}).\footnote{This is the environment most closely aligned with TLP. The comparison should therefore be viewed as complementary: TLP is tailored to local misspecification, while our estimator-averaging approach targets finite-sample IRF risk more broadly.} Third, we consider fixed-misspecification designs in which the finite-order VAR approximation remains misspecified as the sample size increases. In these cases, the VAR bias persists asymptotically, while finite-lag LP may also be affected by lag truncation. The latter two environments therefore fall outside the benchmark asymptotic framework developed in Section~\ref{sec:asy_theory}, which focuses on the correctly specified case where both LP and VAR consistently estimate the same impulse responses. The purpose of these additional simulations is to examine whether finite-sample MSE-based estimator averaging continues to adapt even in environments not covered by the formal asymptotic theory.\footnote{In both the univariate and multivariate exercises, we use fixed lag-selection rules. Hence, the estimators are not designed to be asymptotically equivalent as in \citet{PlagborgMollerWolf2021}.}

\subsection{A Univariate ARMA Design}

We begin with a simple univariate DGP. This design allows us to (i) verify that the infeasible oracle estimator averaging behaves as predicted by the LP--VAR bias--variance trade-off, (ii) evaluate how well the sieve-bootstrap implementation approximates the oracle weights in finite samples, and (iii) compare the risk properties of estimator averaging with those of $R^2$-based model averaging and TLP.

\subsubsection{Model and Estimators}
\label{subsec:MC_uni}
We consider the following univariate process, in the spirit of \citet{MontielOleaetal2025},
\begin{equation}
y_t
=
\rho y_{t-1}
+
\varepsilon_t
+
\alpha_T \varepsilon_{t-1},
\qquad
\varepsilon_t \overset{iid}{\sim}\mathcal N(0,1),
\label{eq:dgp}
\end{equation}
where $|\rho|<1$ and $\alpha_T$ controls the degree of departure from the correctly specified AR(1) benchmark. For a given value of $\alpha_T$, the true impulse response to a one-unit innovation is $\theta_0=1$, $\theta_1=\rho+\alpha_T$, and $\theta_h=\rho\theta_{h-1}$ for $h\ge2$, or equivalently $\theta_h=\rho^h+\alpha_T\rho^{h-1}$ for $h\ge1$. We evaluate horizons $h\in\{1,\ldots,H_{\max}\}$ and set $H_{\max}=10$.

We consider two types of exercises. In the fixed-$T$ horizon-profile analysis, we set $T=240$, an empirically relevant sample size used in \citet{MontielOleaetal2025}, and consider the correctly specified AR(1) case, $\alpha_T=0$, together with fixed ARMA(1,1) misspecification, $\alpha_T=\alpha$, where $\alpha\in\{0.5,0.9\}$. Since $T$ is fixed in this exercise, there is no separate local-to-AR case. In the finite-sample convergence analysis, where $T$ varies over $\{200,400,800\}$, we additionally consider a local-to-AR design. Specifically, we set $\alpha_T=\alpha\sqrt{200}\,T^{-1/2}$, with $\alpha\in\{0.5,0.9\}$. This is a local-to-zero sequence for the MA coefficient, with local constant $\alpha\sqrt{200}$. The normalization by $200$ makes $\alpha_T=\alpha$ at the baseline sample size $T=200$, so the local and fixed-misspecification designs coincide at $T=200$ but differ as $T$ increases.\footnote{Since the MA component shrinks at rate $T^{-1/2}$, the induced bias is of order $T^{-1/2}$ and its squared contribution is of order $T^{-1}$, the same order as the sampling variance. This makes the design useful for studying the local bias--variance trade-off emphasized in the TLP literature.}

For each exercise, we run 1,000 Monte Carlo replications. Each replication uses a burn-in period of 200 observations before retaining a sample of length $T$. At each horizon, we report results for six estimators: (1) the local projection estimator (LP), which is the OLS coefficient on $y_t$ in the regression of $y_{t+h}$ on $y_t$ and $y_{t-1}$, as in \citet{MontielOleaetal2025}; (2) the VAR estimator, implemented as an AR(1); (3) the infeasible oracle averaged estimator based on MSE minimization; (4) the sieve-bootstrap plug-in estimator-averaging procedure; (5) the TLP shrinkage estimator; and (6) the $R^2$-based model-averaging estimator. To approximate the oracle weight and implement the feasible averaging procedures, we use 500 bootstrap draws.

\subsubsection{Horizon-Specific Risk and Weights at Fixed Sample Size}

We first fix $T=240$ and study horizons $h=1,\ldots,H_{\max}$. This allows us to trace how the LP--VAR bias--variance trade-off evolves across horizons and to assess whether the oracle and feasible averaging rules can capture these complex dynamics.

\begin{figure}[!htb]
  \centering
  \begin{subfigure}[t]{0.4\textwidth}
    \centering
   \includegraphics[width=\textwidth]{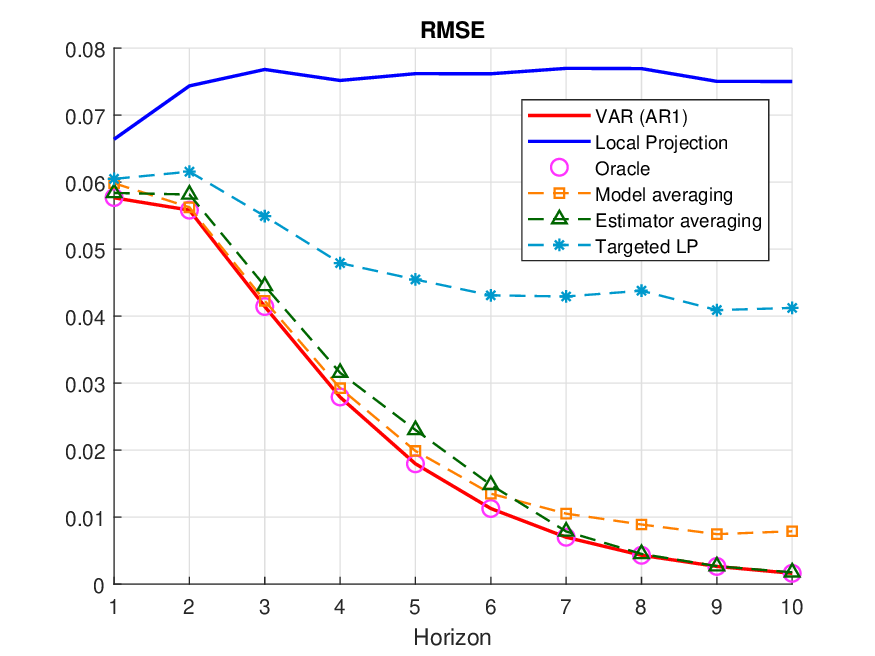}
  \end{subfigure}
  \hspace{0.04\textwidth}
  \begin{subfigure}[t]{0.4\textwidth}
    \centering
   \includegraphics[width=\textwidth]{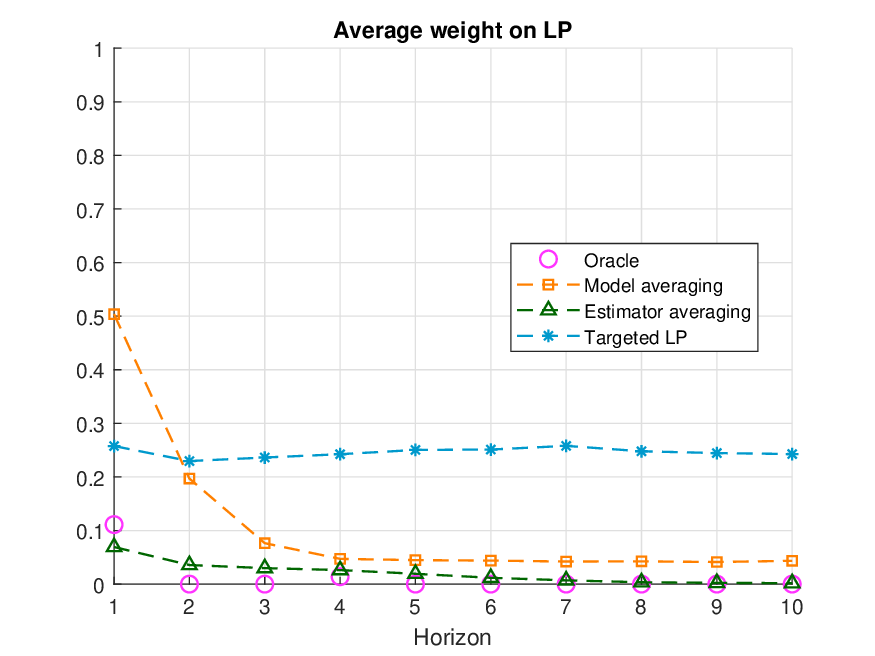}
  \end{subfigure}
  \caption{RMSE of IRF estimators and average weights on LP. AR(1) process with $\rho=0.5$, $T=240$, 1,000 replications.}
  \label{fig:arma50}
\end{figure}

Under the AR(1) DGP ($\rho=0.5, \alpha=0$), the VAR estimator is correctly specified and efficient, effectively coinciding with the oracle RMSE; see Figure~\ref{fig:arma50}. LP yields much higher RMSEs across all horizons, meaning the optimal oracle weight on LP is approximately zero. The feasible plug-in estimator averaging successfully captures this, correctly assigning near-zero weight to the LP and maintaining RMSEs very close to the oracle. Model averaging also performs well here, correctly favoring the VAR (particularly for $h>2$) and achieving similar risk reductions. Because TLP is designed as a shrinkage estimator under local misspecification, it does not fully collapse to the efficient VAR benchmark in this correctly specified AR(1) design. It therefore assigns positive weight to LP even though VAR is the lower-risk estimator here, leading to higher RMSEs than the plug-in and model-averaging approaches.

\begin{figure}[!htb]
  \centering
  \begin{subfigure}[t]{0.4\textwidth}
    \centering
   \includegraphics[width=\textwidth]{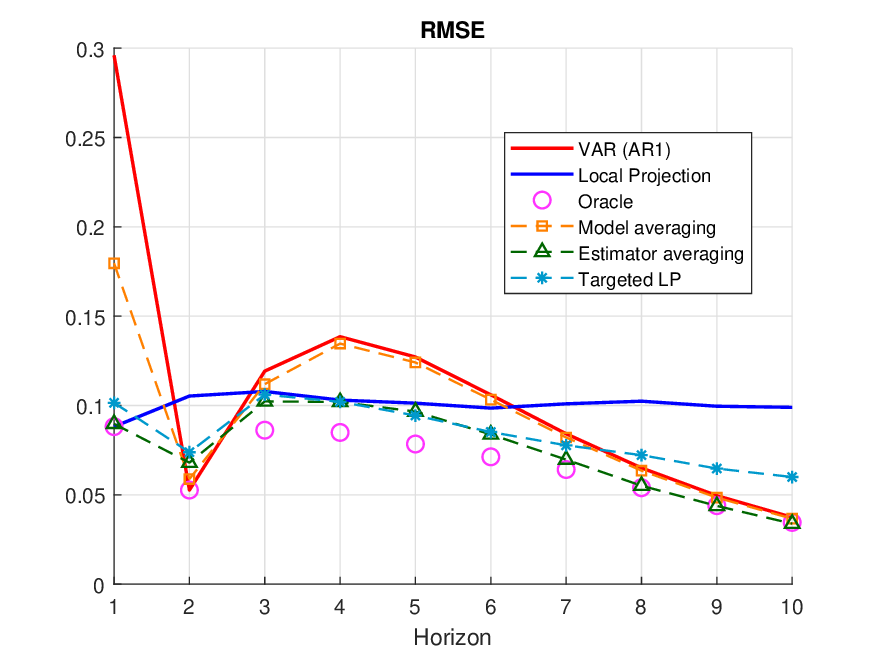}
  \end{subfigure}
  \hspace{0.04\textwidth}
  \begin{subfigure}[t]{0.4\textwidth}
    \centering
   \includegraphics[width=\textwidth]{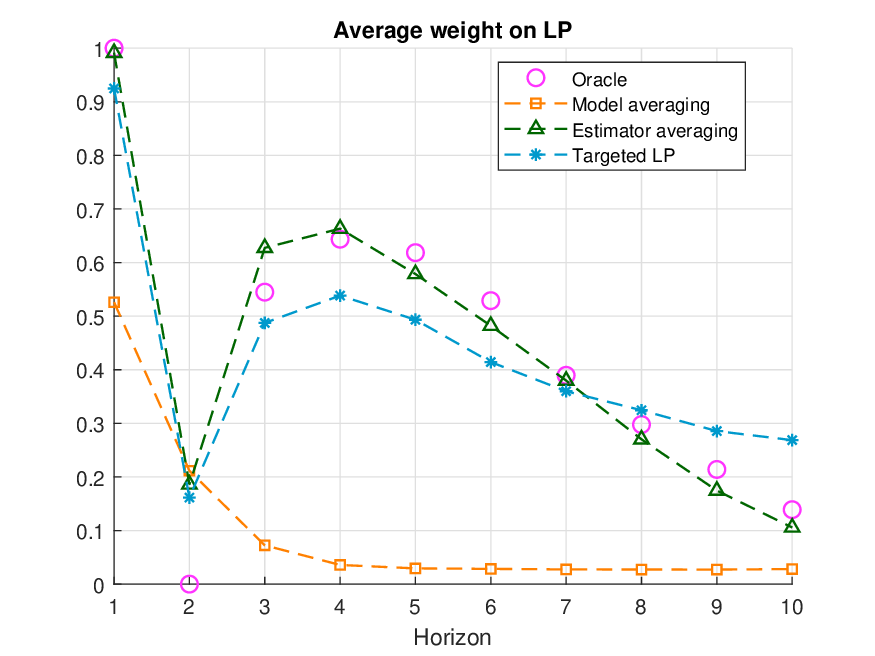}
  \end{subfigure}
  \caption{RMSE of IRF estimators and average weights on LP. ARMA(1,1) process with $\rho=0.5$ and $\alpha=0.5$, $T=240$, 1,000 replications.}
  \label{fig:arma55}
\end{figure}

Figures~\ref{fig:arma55} and \ref{fig:arma59} report the RMSEs and weights for the ARMA(1,1) DGPs, under which the finite-order baseline estimators are misspecified. In the case with moderate moving-average dynamics ($\alpha=0.5$), the optimal LP weight exhibits a highly non-monotonic trajectory: it drops dramatically from $h=1$ to $h=2$, rises again, and only begins a steady decline after horizons 4 and 5. Because LP performs much better than VAR at these specific short horizons ($h=1$ and $h \in [3,6]$), it commands a higher optimal weight. Here, both the plug-in estimator averaging and the TLP track this dynamic much better than model averaging, resulting in significantly lower RMSEs. At longer horizons, however, the TLP weights move less closely with the oracle trajectory, while the plug-in estimator averaging remains closer to the risk-minimizing benchmark.

\begin{figure}[!htb]
\centering
\begin{subfigure}[t]{0.4\textwidth}
\centering
\includegraphics[width=\textwidth]{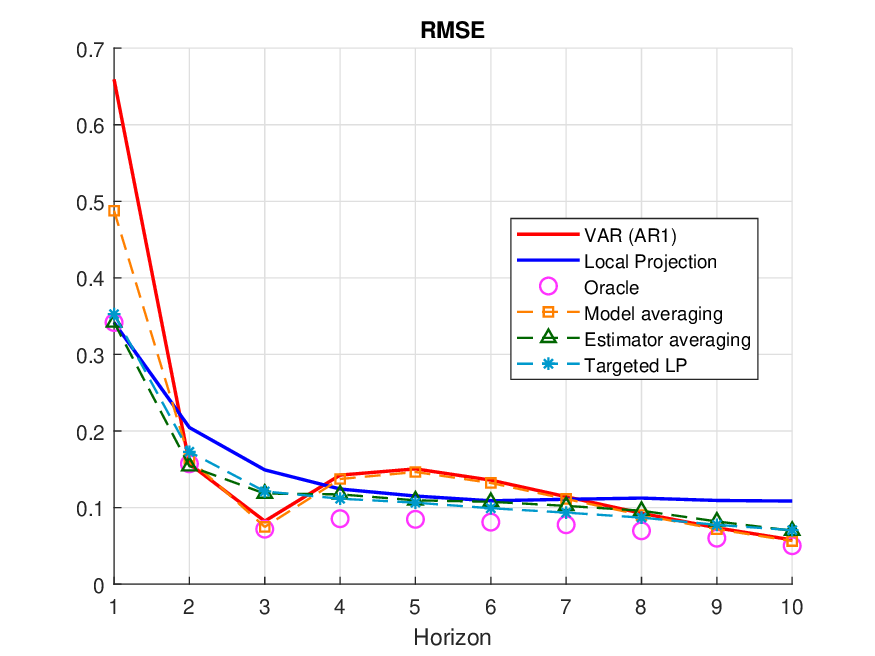}
\end{subfigure}
\hspace{0.04\textwidth}
\begin{subfigure}[t]{0.4\textwidth}
\centering
\includegraphics[width=\textwidth]{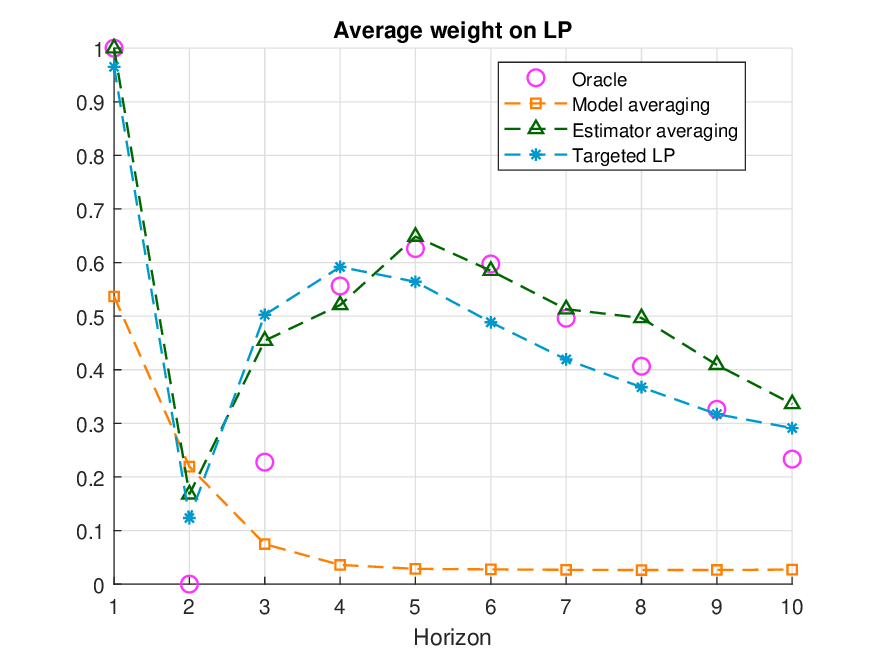}
\end{subfigure}
\caption{RMSE of IRF estimators and average weights on LP. ARMA(1,1) process with $\rho=0.5$ and $\alpha=0.9$, $T=240$, 1,000 replications.}
\label{fig:arma59}
\end{figure}

Under the ARMA(1,1) DGP with $\alpha=0.9$, the relative RMSEs of the VAR and LP estimators, and consequently the shape of the optimal weights across horizons, are remarkably similar to the $\alpha=0.5$ case. While the qualitative dynamics remain unchanged, the performance gap between the plug-in estimator averaging and the TLP largely disappears, and both approaches generally outperform model averaging across most horizons. 

The persistent underperformance of model averaging under these DGPs stems from the rigidity of its weighting scheme. Relying mechanically on in-sample fit ($R^2$) rather than out-of-sample IRF estimation risk, it entirely fails to capture the strong horizon-dependence of the oracle weights. Instead of matching the complex drop-and-rise dynamics, it mechanically assigns a weight of almost 1 to the VAR estimator for all horizons $h \ge 3$. 

Overall, this fixed-sample exercise supports the finite-sample risk-based motivation of the plug-in estimator averaging approach. In the correctly specified AR(1) case, the method appropriately places little weight on LP and remains close to the efficient VAR benchmark. In the fixed ARMA(1,1) designs, where the bias--variance trade-off becomes more complex and horizon-specific, the plug-in estimator adjusts the LP weight in line with the oracle risk-minimizing benchmark. TLP is also competitive in these misspecified designs, although it is less well suited to the correctly specified AR(1) benchmark where VAR is already efficient. By contrast, the $R^2$-based model-averaging rule is less responsive to horizon-specific IRF risk, especially in the ARMA(1,1) designs.

\subsubsection{Finite-Sample Convergence}
\label{sec:More:MC:section}

To complement the horizon-profile evidence, we next examine finite-sample convergence by varying $T\in\{200,400,800\}$ and focusing on representative horizons $h\in\{1,3,6\}$. Table~\ref{tab:uni_rmse_0.5} reports the RMSEs of the IRF estimators for the baseline case with $\rho=0.5$. In addition to the correctly specified AR(1) design and the fixed ARMA(1,1) designs, this convergence exercise also includes local-to-AR designs, where the MA coefficient shrinks with $T$ as described in Section~\ref{subsec:MC_uni}. Both the local and fixed misspecification designs are evaluated at $\alpha\in\{0.5,0.9\}$.  Section~\ref{app:MCuni_tables} in the Online Appendix reports additional results: Table~\ref{tab:uni_rmse_0.9} provides the analogous IRF RMSEs for $\rho=0.9$, and Tables~\ref{tab:uni_weight_0.5}--\ref{tab:uni_weight_0.9} report the RMSEs of the estimated weights relative to the oracle.

\begin{sidewaystable}[!htbp]
\centering
\caption{RMSE of univariate IRF estimators with $\rho=0.5$}
\label{tab:uni_rmse_0.5}
\resizebox{\textwidth}{!}{%
\begin{tabular}{lcccccccccccccccccc}
\toprule
& \multicolumn{6}{c}{$h=1$} & \multicolumn{6}{c}{$h=3$} & \multicolumn{6}{c}{$h=6$} \\
\cmidrule(lr){2-7}\cmidrule(lr){8-13}\cmidrule(lr){14-19}
$T$ & $\widehat{\theta}_{\mathrm{VAR}}$ & $\widehat{\theta}_{\mathrm{LP}}$ & $\widehat{\theta}_O$ & $\widehat{\theta}_P$ & $\widehat{\theta}_{TLP}$ & $\widehat{\theta}_M$ & $\widehat{\theta}_{\mathrm{VAR}}$ & $\widehat{\theta}_{\mathrm{LP}}$ & $\widehat{\theta}_O$ & $\widehat{\theta}_P$ & $\widehat{\theta}_{TLP}$ & $\widehat{\theta}_M$ & $\widehat{\theta}_{\mathrm{VAR}}$ & $\widehat{\theta}_{\mathrm{LP}}$ & $\widehat{\theta}_O$ & $\widehat{\theta}_P$ &$\widehat{\theta}_{TLP}$ & $\widehat{\theta}_M$ \\
\midrule
\multicolumn{19}{c}{AR(1)}\\
\hline
200 & 0.0639 & 0.0723 & 0.0639 & 0.0653 & 0.0661 & 0.0657 & 0.0448 & 0.0834 & 0.0448 & 0.0485 & 0.0596 & 0.0456 & 0.0122 & 0.0843 & 0.0122 & 0.0124 & 0.0492 & 0.0163 \\
400 & 0.0435 & 0.0498 & 0.0435 & 0.0443 & 0.0454 & 0.0451 & 0.0322 & 0.0580 & 0.0323 & 0.0346 & 0.0427 & 0.0323 & 0.0086 & 0.0585 & 0.0086 & 0.0091 & 0.0334 & 0.0094 \\
800 & 0.0311 & 0.0350 & 0.0311 & 0.0314 & 0.0324 & 0.0319 & 0.0229 & 0.0426 & 0.0229 & 0.0248 & 0.0312 & 0.0232 & 0.0059 & 0.0426 & 0.0059 & 0.0072 & 0.0245 & 0.0062 \\
\addlinespace
\multicolumn{19}{c}{local-to-AR(1), $\alpha=0.5$}\\
\hline
200 & 0.2995 & 0.0962 & 0.0962 & 0.0988 & 0.1123 & 0.1847 & 0.1184 & 0.1154 & 0.0891 & 0.1103 & 0.1108 & 0.1109 & 0.1053 & 0.1126 & 0.0794 & 0.0889 & 0.0949 & 0.1027 \\
400 & 0.1814 & 0.0519 & 0.0519 & 0.0560 & 0.0607 & 0.1046 & 0.1047 & 0.0736 & 0.0648 & 0.0735 & 0.0778 & 0.0986 & 0.0755 & 0.0727 & 0.0528 & 0.0565 & 0.0624 & 0.0746 \\
800 & 0.1120 & 0.0352 & 0.0352 & 0.0402 & 0.0402 & 0.0646 & 0.0822 & 0.0497 & 0.0465 & 0.0508 & 0.0547 & 0.0777 & 0.0498 & 0.0495 & 0.0359 & 0.0374 & 0.0421 & 0.0495 \\
\addlinespace
\multicolumn{19}{c}{ARMA(1,1), $\alpha=0.5$}\\
\hline
200 & 0.2995 & 0.0962 & 0.0962 & 0.0988 & 0.1123 & 0.1847 & 0.1184 & 0.1154 & 0.0891 & 0.1103 & 0.1108 & 0.1109 & 0.1053 & 0.1126 & 0.0794 & 0.0889 & 0.0949 & 0.1027 \\
400 & 0.2914 & 0.0790 & 0.0790 & 0.0795 & 0.0871 & 0.1753 & 0.1174 & 0.0820 & 0.0685 & 0.0788 & 0.0847 & 0.1096 & 0.1044 & 0.0785 & 0.0626 & 0.0773 & 0.0743 & 0.1031 \\
800 & 0.2881 & 0.0725 & 0.0725 & 0.0725 & 0.0768 & 0.1729 & 0.1169 & 0.0598 & 0.0523 & 0.0570 & 0.0623 & 0.1086 & 0.1037 & 0.0567 & 0.0500 & 0.0634 & 0.0605 & 0.1029 \\
\addlinespace
\multicolumn{19}{c}{local-to-AR(1), $\alpha=0.9$}\\
\hline
200 & 0.6621 & 0.3476 & 0.3476 & 0.3476 & 0.3606 & 0.4917 & 0.0833 & 0.1585 & 0.0752 & 0.1261 & 0.1266 & 0.0761 & 0.1349 & 0.1245 & 0.0918 & 0.1206 & 0.1115 & 0.1313 \\
400 & 0.4077 & 0.1397 & 0.1397 & 0.1397 & 0.1467 & 0.2635 & 0.1140 & 0.0920 & 0.0678 & 0.0846 & 0.0888 & 0.1052 & 0.1229 & 0.0827 & 0.0677 & 0.0866 & 0.0814 & 0.1212 \\
800 & 0.2484 & 0.0574 & 0.0574 & 0.0576 & 0.0620 & 0.1456 & 0.1143 & 0.0574 & 0.0518 & 0.0559 & 0.0612 & 0.1066 & 0.0945 & 0.0554 & 0.0482 & 0.0597 & 0.0580 & 0.0938 \\
\addlinespace
\multicolumn{19}{c}{ARMA(1,1), $\alpha=0.9$}\\
\hline
200 & 0.6621 & 0.3476 & 0.3476 & 0.3476 & 0.3606 & 0.4917 & 0.0833 & 0.1585 & 0.0752 & 0.1261 & 0.1266 & 0.0761 & 0.1349 & 0.1245 & 0.0918 & 0.1206 & 0.1115 & 0.1313 \\
400 & 0.6555 & 0.3383 & 0.3383 & 0.3383 & 0.3448 & 0.4851 & 0.0774 & 0.1250 & 0.0568 & 0.0918 & 0.1032 & 0.0687 & 0.1348 & 0.0872 & 0.0700 & 0.0910 & 0.0853 & 0.1329 \\
800 & 0.6526 & 0.3362 & 0.3362 & 0.3362 & 0.3394 & 0.4834 & 0.0753 & 0.1048 & 0.0433 & 0.0685 & 0.0898 & 0.0656 & 0.1350 & 0.0630 & 0.0544 & 0.0718 & 0.0668 & 0.1339 \\
\bottomrule
\end{tabular}
}
\begin{minipage}{\textwidth}\footnotesize
Note: $\widehat{\theta}_{\mathrm{VAR}}$ and $\widehat{\theta}_{\mathrm{LP}}$ are the VAR and LP IRF estimators.
$\widehat{\theta}_O$ uses the infeasible oracle min-MSE weight $w_h^\star$ in \eqref{eq:wstar}.
$\widehat{\theta}_P$ uses the sieve-bootstrap plug-in weight $\widehat w_P$ (Algorithm~\ref{alg:SV-weight}).
$\widehat{\theta}_{TLP}$ is the TLP estimator. $\widehat{\theta}_M$ uses the $R^2$-based model-averaging weight $\widehat w_{M}$ in \eqref{eq:wr2}. The autoregressive coefficient ($\rho$) is 0.5 in each case. Results are based on 1,000 Monte Carlo replications and 500 bootstrap iterations.
\end{minipage}
\end{sidewaystable}

Two main messages emerge from the IRF estimation results. First, the RMSEs of the feasible plug-in estimator averaging generally decline as $T$ increases and remain close to the oracle benchmark. Second, the plug-in approach is consistently competitive and often delivers the lowest RMSE among the feasible methods. TLP becomes more competitive when misspecification is more relevant, consistent with its shrinkage motivation under local misspecification, while the $R^2$-based model-averaging rule is less responsive to horizon-specific IRF risk.

The relative performance of the alternative feasible methods depends heavily on the extent to which the finite-order baseline estimators are misspecified. When the true DGP features a small or zero moving-average component (e.g., the AR(1) case), model averaging performs adequately, whereas the TLP yields the highest RMSEs. Conversely, under DGPs inducing severe misspecification ($\alpha=0.9$), the performance of the TLP improves relative to the alternatives, while model averaging performs less well in these designs. The plug-in estimator averaging is the most consistently adaptive feasible method across these shifting environments. 

Finally, the weight RMSE results reported in Section~\ref{app:MCuni_tables} of the Online Appendix are broadly consistent with our theoretical framework. Theorem~\ref{thm:SV-weights-proof} establishes weight consistency strictly for environments where the underlying finite-order estimators are correctly specified. In line with this, the estimated weights reliably converge toward the oracle benchmark under the AR(1) DGP, with only one exception ($h=6$ when $\rho=0.5$). In designs where the baseline estimators are formally misspecified (the local-to-AR and fixed ARMA cases), we lack analogous theoretical guarantees, and strict weight convergence is less evident in the simulations. However, the IRF RMSEs confirm that the weaker evidence of weight convergence---which likely reflects weak identification of the optimal weight in finite samples---has negligible practical consequences. The plug-in approach remains highly resilient, successfully exploiting the finite-sample bias--variance trade-off to minimize IRF estimation risk regardless of whether the estimated weights themselves achieve asymptotic consistency.

\subsection{A Multivariate SVARMA Design}
\label{subsec:MC_multi}

We next extend the analysis to a multivariate setting that more closely resembles a macroeconomic impulse-response application. We consider a three-variable system that nests a correctly specified SVAR(4) benchmark and an SVARMA(4,1) extension. The latter is used to study both local and fixed misspecification: in the local case, the moving-average component shrinks with the sample size, while in the fixed case it remains constant. In the SVARMA designs, the finite-order VAR omits the MA component, while finite-lag LP may also be affected by lag truncation. Throughout, we assume that the structural shocks are observed, allowing us to abstract from identification issues and focus directly on impulse-response estimation.

\subsubsection{Model and Estimators}

We consider two multivariate DGP families. The first is a correctly specified SVAR(4) benchmark with coefficient matrices $(A_j^s,M_0^s)$. The second is an SVARMA(4,1) family with coefficient matrices $(A_j^m,M_0^m,M_1^m)$:
\[
Y_t=\sum_{j=1}^4 A_j^mY_{t-j}+M_0^m\varepsilon_t+c_T M_1^m\varepsilon_{t-1},
\qquad
\varepsilon_t \sim \mathcal{N}(0,I_3).
\]

Setting $c_T=1$ gives the fixed-misspecification SVARMA(4,1) design, while setting $c_T=c_0T^{-1/2}$ gives the local-to-SVAR design. The coefficient matrices are reported in Section~\ref{app:multi_details} of the Online Appendix.

We conduct two exercises. For the fixed-$T$ horizon-profile analysis, we set $T=200$ and compare the correctly specified SVAR(4) design with the fixed-misspecification SVARMA(4,1) design. For the finite-sample convergence analysis, where $T\in\{200,800,2000\}$, we additionally include the local-to-SVAR design. In that case, we set $c_0=\sqrt{200}$, so the local and fixed SVARMA designs coincide at the baseline sample size $T=200$, while the MA component shrinks at rate $T^{-1/2}$. 

In parallel with the univariate case, we evaluate six estimators: (1) LP, with the lag length set equal to the lag order used for the VAR; (2) VAR, with the lag order selected by AIC up to a maximum of 8 lags; (3) the infeasible oracle averaged estimator based on MSE minimization; (4) the feasible plug-in estimator-averaging procedure, with weights estimated by the AR-sieve bootstrap; (5) the TLP shrinkage estimator of \citet{NemtyrevBoldea2026}; and (6) the $R^2$-based model-averaging estimator. For each exercise, we run 1,000 Monte Carlo replications. We use 500 bootstrap replications to approximate the oracle weight and implement the feasible averaging procedures.

\subsubsection{Horizon-Specific Risk and Weights at Fixed Sample Size}

Figure~\ref{fig:multi_results} reports the RMSEs and estimated weights for the SVAR(4) and SVARMA(4,1) DGPs when $T=200$. In the correctly specified SVAR(4) case, the bias--variance trade-off is clearly visible; see Panels (a) and (c). LP delivers lower RMSEs for horizons $h=2$ to $h=5$, after which VAR becomes superior because its structured dynamics produce greater efficiency. The oracle estimator closely tracks the lower envelope of the two individual estimators.

\begin{figure}[H]
\centering
\begin{subfigure}[t]{0.4\textwidth}
\centering
\includegraphics[width=\textwidth]{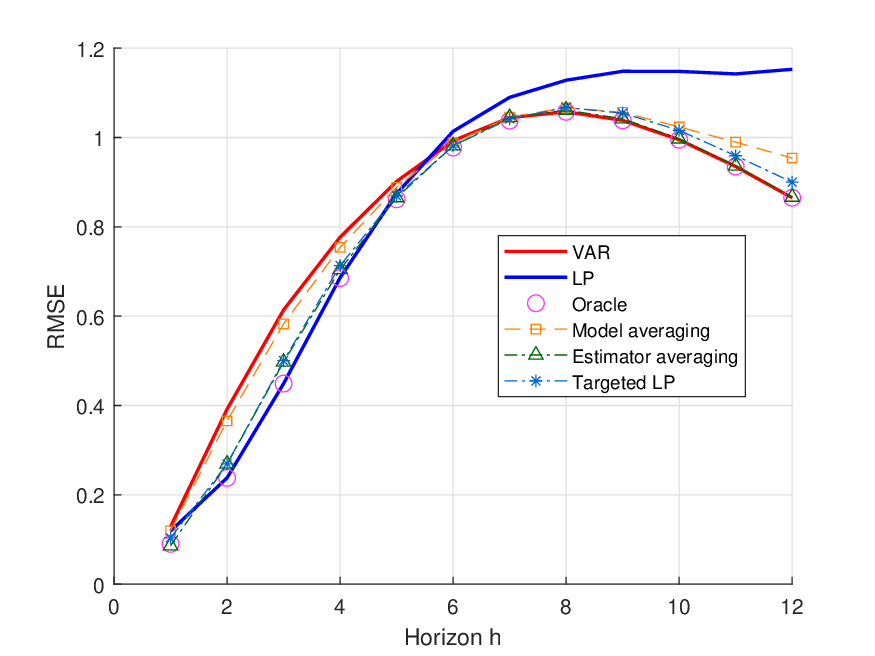}
\caption{RMSE: SVAR(4)}
\label{fig:rmseVAR4}
\end{subfigure}
\hspace{0.04\textwidth}
\begin{subfigure}[t]{0.4\textwidth}
\centering
\includegraphics[width=\textwidth]{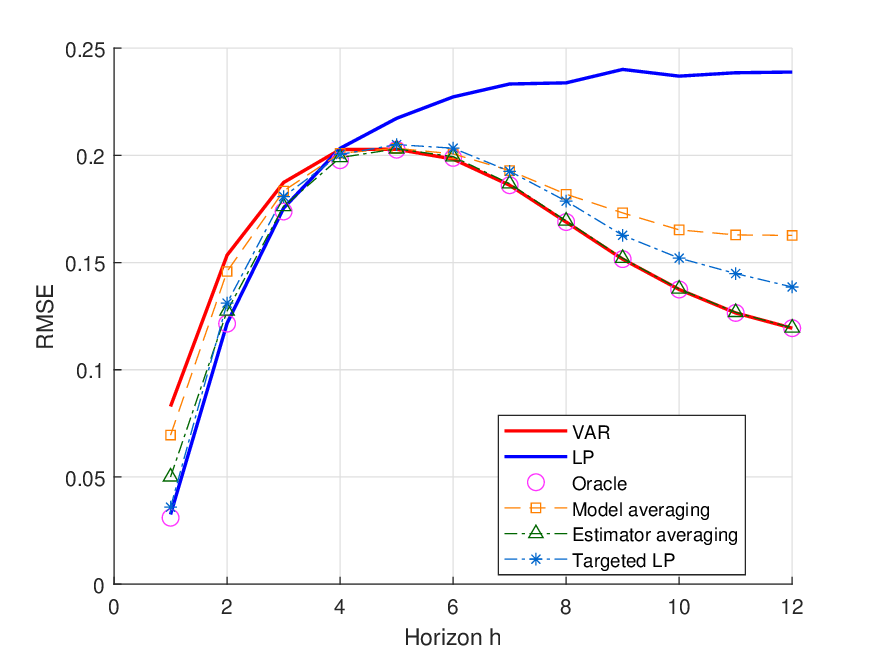}
\caption{RMSE: SVARMA(4,1)}
\label{fig:rmseVARMA41}
\end{subfigure}

\vspace{0.4em}

\begin{subfigure}[t]{0.4\textwidth}
\centering
\includegraphics[width=\textwidth]{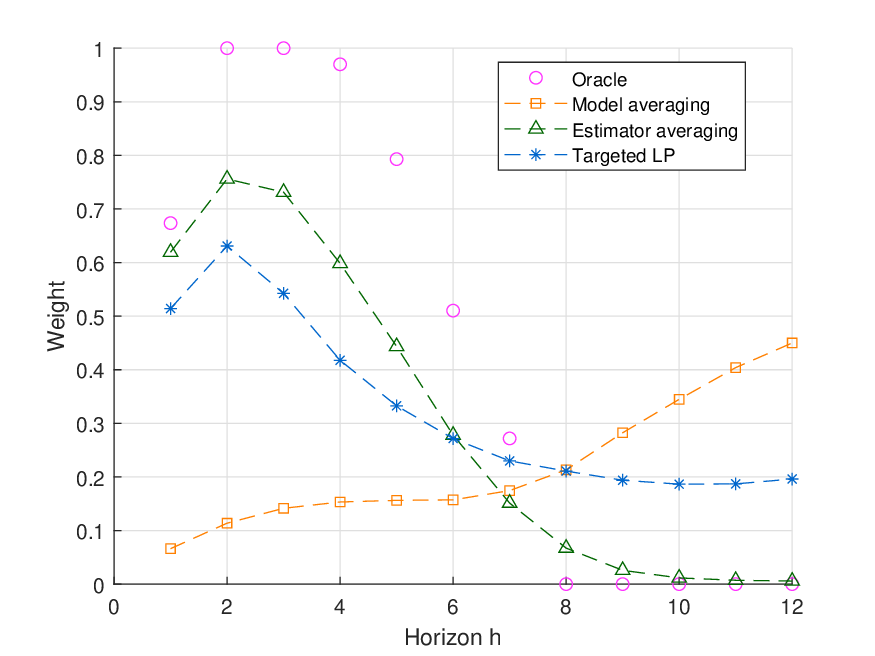}
\caption{Weights: SVAR(4)}
\label{fig:weightVAR4}
\end{subfigure}
\hspace{0.04\textwidth}
\begin{subfigure}[t]{0.4\textwidth}
\centering
\includegraphics[width=\textwidth]{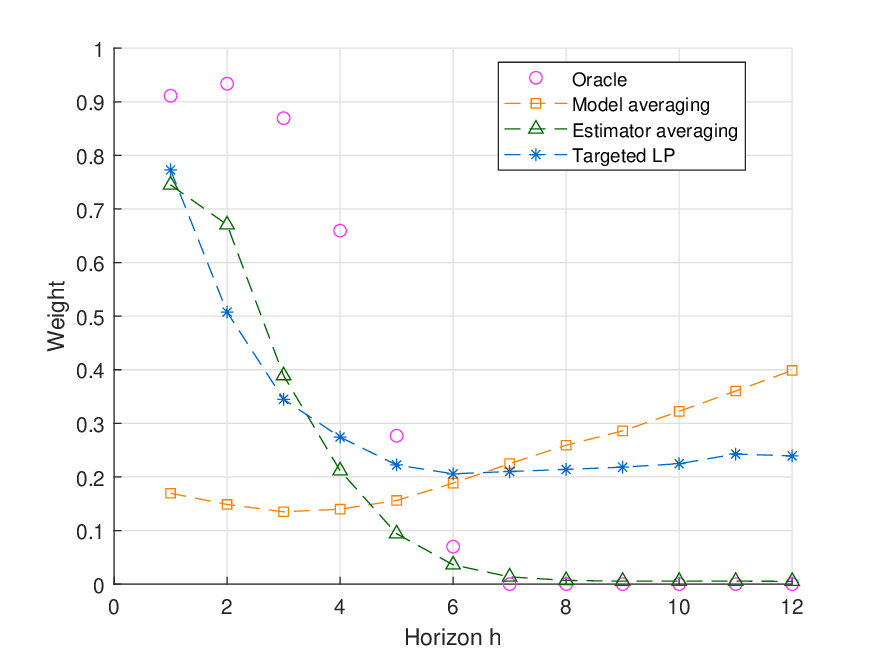}
\caption{Weights: SVARMA(4,1)}
\label{fig:weightVARMA41}
\end{subfigure}

\caption{Root mean squared errors (RMSE) and average weights on LP for two multivariate data-generating processes. Results are based on a sample size of $T=200$, using 1,000 Monte Carlo replications and 500 bootstrap iterations.}
\label{fig:multi_results}
\end{figure}

In terms of RMSE, the feasible plug-in estimator-averaging procedure is nearly indistinguishable from the oracle, indicating that the risk-based weighting rule is well estimated in finite samples. The TLP generally performs somewhat worse than the plug-in estimator but dominates model averaging. Specifically, the TLP achieves RMSEs that are almost identical to the plug-in approach at short and medium horizons; however, for horizons $h \ge 8$, its RMSE becomes larger. The weight plot in Panel (c) makes the reason clear: while the TLP tracks the dynamic shape of the oracle weights much better than model averaging, it does so less precisely than the plug-in estimator. By contrast, model averaging performs noticeably worse across the board, tending to underweight LP early and overweight it later because it is driven by in-sample fit rather than IRF estimation risk.

In the misspecified case (the SVARMA(4,1) DGP), despite both baseline estimators suffering from misspecification bias, LP retains its advantage at short horizons ($h<4$), while VAR performs better thereafter; see Panels (b) and (d). The oracle again tracks the minimum RMSE, and the plug-in estimator averaging remains remarkably close to that benchmark. The relative performance of the TLP is qualitatively similar to the correctly specified case, albeit with a slight short-run advantage: at impact ($h=1$) and intermediate horizons ($h=4$ and $5$), the TLP weights align closer to the oracle than the plug-in weights, yielding a better RMSE. At longer horizons, however, the advantage of the plug-in estimator becomes much more visible as it accurately captures the optimal weighting dynamics, which tilt entirely toward the VAR. Model averaging, meanwhile, continues to perform least effectively, failing to reproduce the rapid decline in the oracle LP weight and leading to substantially suboptimal averaging at longer horizons where the LP--VAR gap is large.

\subsubsection{Finite-sample convergence}
\label{sec:multi:asymp}

To summarize the multivariate convergence results, Table~\ref{tab:multi_irf_rmse} reports the RMSEs for $T\in\{200,800,2000\}$ at horizons $h\in\{1,6,12\}$.  Consistent with the univariate evidence, the RMSEs of estimator averaging decline as $T$ increases, remaining below those of both model averaging and the TLP across almost all specifications and sample sizes. Even at $T=2000$, where LP and VAR have both become quite accurate, estimator averaging typically yields a slight improvement by optimally combining their remaining finite-sample differences.

Table~\ref{tab:multi_weight_rmse} in the Online Appendix Section~\ref{app:DGP_matrices} reports the RMSE of the estimated weights relative to the oracle weights. Unlike the IRF estimates, the weights do not converge monotonically in all cases, particularly under SVARMA(4,1). This may be related to a flat or weakly identified weighting problem, which presumably occurs when the VAR and LP estimates converge toward each other. In such cases, estimator averaging can remain close to risk-optimal even when the estimated weights display discernible finite-sample variation.

\begin{sidewaystable}[!htbp]
\centering
\caption{RMSE of multivariate IRF estimators}
\label{tab:multi_irf_rmse}
\resizebox{\textwidth}{!}{%
\begin{tabular}{lcccccccccccccccccc}
\toprule
& \multicolumn{6}{c}{$h=1$} & \multicolumn{6}{c}{$h=6$} & \multicolumn{6}{c}{$h=12$} \\
\cmidrule(lr){2-7}\cmidrule(lr){8-13}\cmidrule(lr){14-19}
$T$ & $\widehat{\theta}_{\mathrm{VAR}}$ & $\widehat{\theta}_{\mathrm{LP}}$ & $\widehat{\theta}_O$ & $\widehat{\theta}_P$ & $\widehat{\theta}_{TLP}$ & $\widehat{\theta}_M$ & $\widehat{\theta}_{\mathrm{VAR}}$ & $\widehat{\theta}_{\mathrm{LP}}$ & $\widehat{\theta}_O$ & $\widehat{\theta}_P$ & $\widehat{\theta}_{TLP}$ & $\widehat{\theta}_M$ & $\widehat{\theta}_{\mathrm{VAR}}$ & $\widehat{\theta}_{\mathrm{LP}}$ & $\widehat{\theta}_O$ & $\widehat{\theta}_P$ &$\widehat{\theta}_{TLP}$ &  $\widehat{\theta}_M$ \\
\midrule
\multicolumn{19}{c}{SVAR(4)}\\
\hline
200  & 0.1260 & 0.1172 & 0.0869 & 0.0877 & 0.1030 & 0.1175 & 1.0213 & 1.0333 & 1.0005 & 1.0001 & 1.0129 & 1.0136 & 0.8737 & 1.1483 & 0.8737 & 0.8748 & 0.9107 & 0.9530 \\
800  & 0.0328 & 0.0531 & 0.0284 & 0.0271 & 0.0383 & 0.0309 & 0.4356 & 0.4476 & 0.4323 & 0.4318 & 0.4344 & 0.4345 & 0.4622 & 0.5520 & 0.4622 & 0.4629 & 0.4740 & 0.4878 \\
2000 & 0.0148 & 0.0330 & 0.0136 & 0.0130 & 0.0215 & 0.0140 & 0.2689 & 0.2839 & 0.2686 & 0.2677 & 0.2728 & 0.2688 & 0.3144 & 0.3475 & 0.3145 & 0.3141 & 0.3158 & 0.3181 \\
\addlinespace
\multicolumn{19}{c}{local-to-SVAR(4)}\\
\hline
200  & 0.0819 & 0.0320 & 0.0302 & 0.0442 & 0.0352 & 0.0685 & 0.1965 & 0.2245 & 0.1965 & 0.1974 & 0.2002 & 0.2000 & 0.1140 & 0.2435 & 0.1140 & 0.1142 & 0.1368 & 0.1638 \\
800  & 0.0214 & 0.0145 & 0.0122 & 0.0141 & 0.0143 & 0.0179 & 0.0983 & 0.1066 & 0.0983 & 0.0985 & 0.0992 & 0.0984 & 0.0629 & 0.1177 & 0.0629 & 0.0629 & 0.0711 & 0.0656 \\
2000 & 0.0090 & 0.0090 & 0.0064 & 0.0068 & 0.0077 & 0.0076 & 0.0634 & 0.0681 & 0.0635 & 0.0636 & 0.0639 & 0.0634 & 0.0420 & 0.0722 & 0.0420 & 0.0419 & 0.0457 & 0.0421 \\
\addlinespace
\multicolumn{19}{c}{SVARMA(4,1)}\\
\hline
200  & 0.0819 & 0.0320 & 0.0302 & 0.0442 & 0.0352 & 0.0685 & 0.1965 & 0.2245 & 0.1965 & 0.1974 & 0.2002 & 0.2000 & 0.1140 & 0.2435 & 0.1140 & 0.1142 & 0.1368 & 0.1638 \\
800  & 0.0216 & 0.0146 & 0.0122 & 0.0143 & 0.0143 & 0.0184 & 0.0949 & 0.1046 & 0.0949 & 0.0952 & 0.0965 & 0.0952 & 0.0590 & 0.1143 & 0.0590 & 0.0590 & 0.0678 & 0.0615 \\
2000 & 0.0093 & 0.0090 & 0.0065 & 0.0070 & 0.0078 & 0.0080 & 0.0612 & 0.0657 & 0.0616 & 0.0614 & 0.0619 & 0.0613 & 0.0367 & 0.0682 & 0.0367 & 0.0367 & 0.0413 & 0.0368 \\
\bottomrule
\end{tabular}
}
\begin{minipage}{\textwidth}\footnotesize
Note: $\widehat{\theta}_{\mathrm{VAR}}$ and $\widehat{\theta}_{\mathrm{LP}}$ are the VAR and LP IRF estimators.
$\widehat{\theta}_O$ uses the infeasible oracle min-MSE weight $w_h^\star$ in \eqref{eq:wstar}.
$\widehat{\theta}_P$ uses the sieve-bootstrap plug-in weight $\widehat w_P$ (Algorithm~\ref{alg:SV-weight}).
$\widehat{\theta}_{TLP}$ is the TLP estimator. $\widehat{\theta}_M$ uses the $R^2$-based model-averaging weight $\widehat w_{M}$ in \eqref{eq:wr2}. Results are based on 1,000 Monte Carlo replications and 500 bootstrap iterations.
\end{minipage}
\end{sidewaystable}

\section{Empirical Application}
\label{sec:application}
Section~V of \citet{BauerSwanson2023} re-assesses the dynamic effects of monetary policy by addressing two key challenges in high-frequency identification: instrument relevance and exogeneity. To improve relevance, they expand the standard set of monetary policy events beyond FOMC announcements to include press conferences, speeches, and testimony by the Federal Reserve Chair, substantially increasing the variation of the surprise series. To ensure exogeneity, they argue that conventional high-frequency surprises suffer from endogeneity because they are systematically correlated with publicly available macroeconomic and financial data predating the announcements---rather than being driven by central bank ``information effects.'' To address this, they orthogonalize the high-frequency surprises with respect to these pre-announcement variables and use the resulting residual as an external instrument for the monetary policy shock. They then estimate IV-LP and IV-SVAR impulse responses of yields, activity, prices, and financial conditions, finding that this correction produces stronger and more plausible macroeconomic estimates.

We replicate their baseline setup using the same monthly data set, orthogonalized high-frequency surprise, and IV specification. We study the responses of the two-year Treasury yield (GBY), industrial production (IP), consumer prices (CPI), and the excess bond premium (EBP) to a 25-basis-point contractionary monetary policy shock. For each variable, we compute IV-LP and IV-SVAR impulse responses based on their specifications, as well as model-averaging combinations of the two. Finally, we also compute the external-instrument version of the estimator-averaging combination described in Remark~\ref{remark:IV}. In the empirical application, we present only the plug-in version, as the flexible estimator averaging yields very similar estimates.

For inference, we employ a nested VAR-sieve wild bootstrap procedure. We first approximate the data-generating process by estimating a reduced-form VAR model, where the lag order is selected via the Bayesian information criterion. To handle potential heteroskedasticity and accommodate the missing observations in the instrument, we apply a wild bootstrap to the data. Specifically, to preserve the identifying contemporaneous correlation between the reduced-form VAR residuals and the external instrument, the same sequence of random wild multipliers---drawn from a Rademacher distribution---is applied simultaneously to both the VAR residuals and the instrument during resampling. Our procedure relies on a double bootstrap architecture in which both the outer and inner loops consist of 500 iterations. The inner bootstrap loop is utilized to approximate the finite-sample variances and biases of the individual IV-SVAR and IV-LP estimators, as well as their covariance, which provides the moments necessary to calculate the optimal weights for the combined estimator. Subsequently, the outer bootstrap loop generates the empirical distribution of all the estimators. The pointwise confidence intervals are computed by extracting the middle 68 percent of the bootstrap estimates centered around the point estimates from the original sample. These intervals should be interpreted as pointwise bootstrap intervals. A full theoretical analysis of the nested proxy-SVAR/IV-LP bootstrap with missing instrument observations is beyond the scope of the paper. Figure~\ref{fig:emp:IRFs} displays the resulting impulse responses. Additional results are reported in Online Appendix Section~\ref{app:IRFs}. Figure~\ref{fig:emp:weights} shows the corresponding weights on the IV-LP estimator, while Figures~\ref{fig:emp:GBY}--\ref{fig:emp:EBP} present the impulse responses with pointwise confidence intervals for all estimators.

Our baseline IV-VAR and IV-LP estimates successfully replicate the results documented in \citet{BauerSwanson2023} (see their Figures 3 and A2, respectively). The IV-VAR responses (red lines) are smooth and tightly shaped: GBY and EBP jump on impact and then gradually return toward zero, IP exhibits a modest hump-shaped decline, and CPI shows a prolonged disinflation. By contrast, the IV-LP responses (blue lines) are much more volatile. For GBY and EBP, LP IRFs oscillate and change sign several times; for IP the decline is steeper and very persistent; for CPI the response eventually turns positive and drifts upward, implying an implausible long-run rise in price level after a contractionary shock. This stark contrast between low-variance but potentially biased IV-VAR estimates and low-bias but noisy IV-LP estimates mirrors the bias–variance trade-off highlighted in our simulations and motivates the use of averaging estimators.

\begin{figure}[H]
  \centering
  \includegraphics[width=\textwidth]{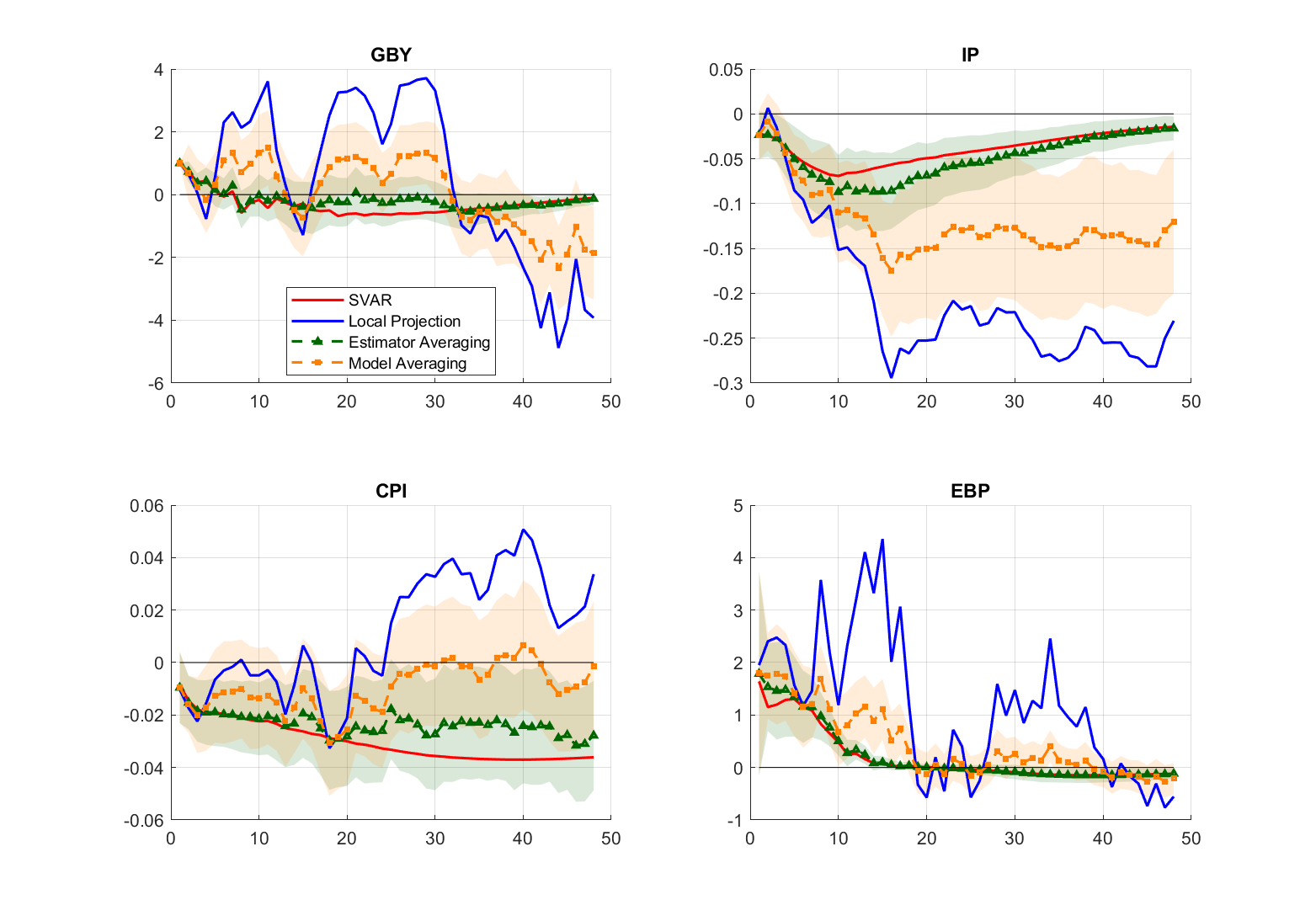}
  \caption{Point estimates of the impulse responses to a 25-basis-point contractionary monetary policy shock. Each panel compares the four estimation approaches for a given macroeconomic variable. The 68 percent pointwise confidence intervals for the averaging estimators are calculated using a nested VAR-sieve wild bootstrap procedure (500 inner and outer iterations) and are centered around the point estimates.}
  \label{fig:emp:IRFs}
\end{figure}

The averaging procedures stabilize the IRFs while preserving their main qualitative features. In many cases, they tilt toward the more economically plausible trajectory. Estimator averaging (green dashed lines with triangles) pulls the paths toward the smoother IV-VAR trajectories wherever the LP estimates are extremely noisy, but still allows for some deviations where the LP and VAR disagree. For GBY, the estimator-averaged IRF shows a sharp increase in the two-year yield that decays within a year, avoiding the large negative swings of the LP while not over-smoothing the near-term reaction. Furthermore, unlike the VAR response, it never sinks persistently into negative territory, a profile that is economically plausible following a tightening shock. For IP, estimator averaging yields a moderate, hump-shaped decline that lies between the highly persistent LP response and the more muted VAR response. For CPI, the combined estimator produces a smaller and less protracted disinflation than the VAR, while successfully avoiding the ``price puzzle'' anomaly exhibited by the LP. For EBP, estimator averaging delivers a sharp, short-run rise in credit spreads that quickly mean-reverts, eliminating the extreme volatility of the LP estimator while preserving the intuitive tightening in financial conditions after a monetary contraction. 

Model averaging based on $R^2$ (orange lines with squares) behaves differently. As Figure~\ref{fig:emp:weights} shows, the $R^2$-based weights on LP are relatively flat across horizons---around one-half for GBY, IP, and CPI, and somewhat lower for EBP. This reflects the fact that LP and VAR achieve similar in-sample fit even when the LP is estimated for longer horizons. Consequently, the model-averaged IRFs remain more heavily influenced by LP at longer horizons. For GBY and EBP, this yields more volatile responses than estimator averaging. The trajectory of the former drops substantially into negative territory after three years; for CPI, the effect disappears entirely after two years; and for IP, the decline is deeper and more persistent. Model averaging therefore improves on raw IV-LP by shrinking its most extreme movements, but it does not fully correct the long-horizon instability generated by the LP estimator and does not eliminate the puzzles.

Figure~\ref{fig:emp:weights} also makes clear why estimator averaging tends to deliver the most intuitive IRFs. The estimator-averaging weights on LP are high only on impact and in the very first few months, then quickly decay toward zero as the horizon increases, especially for IP and EBP. This pattern reflects the empirical fact that identification is strongest and LP variance is smallest at very short horizons, whereas the VAR's parametric structure provides more reliable long-run dynamics. The resulting estimator-averaged IRFs are therefore economically appealing: a front-loaded, temporary rise in GBY; a transitory fall in IP; a small and delayed disinflation in CPI; and a pronounced but short-lived increase in EBP. These responses sit between IV-LP and IV-VAR where the two disagree, dampen LP's long-horizon noise, and avoid relying mechanically on the smooth VAR trajectory, illustrating the practical usefulness of estimator averaging in applied monetary policy analysis.

\section{Conclusion}
\label{sec:conclusion}
LP and VAR are the two workhorse methods for impulse response analysis, and their relative appeal is fundamentally a finite-sample question. LP is often attractive because of its robustness and comparatively low bias, whereas VAR is often attractive because of its greater precision. This paper studies how to combine these two estimators through horizon-specific estimator averaging, with weights chosen to minimize the mean squared error of the structural impulse response itself rather than the in-sample fit of the underlying regression.

We derive closed-form oracle weights that make transparent how the optimal LP share depends on the relative bias, variance, and covariance of LP and VAR, and we develop feasible AR-sieve-bootstrap implementations to estimate these weights in practice. The Monte Carlo results show that estimator averaging can deliver meaningful MSE gains relative to LP and VAR alone, especially because its inherent flexibility allows it to precisely track the horizon-specific dynamics of the bias--variance trade-off. In contrast, the fit-based model-averaging approach is less flexible and performs worse in our design.

In an empirical application revisiting the high-frequency IV monetary policy shocks of \citet{BauerSwanson2023}, estimator averaging yields IRFs for yields, activity, prices, and credit spreads that are stable, economically intuitive, and lie between the often volatile IV-LP estimates and the very smooth IV-VAR estimates. Overall, our results suggest that estimator averaging provides a practical and easy-to-implement complement to existing LP and VAR practice, especially for empirical researchers who want to discipline the finite-sample bias--variance trade-off directly at the level of the impulse response of interest.

\singlespacing
\bibliographystyle{apalike}
\bibliography{references} 

\doublespacing
\newpage
\begin{appendices}
\section{Lemmas}
\label{app:Lemma}

\begin{lemma}[Consistency and rate of scaled bootstrap risk components]
\label{lem:SV-risk-proof}
For any fixed $h$, define the scaled bootstrap variance-covariance components by 
$\widehat A_{T,h}=T\widehat V_{\mathrm{LP},h}^{SV}$, 
$\widehat D_{T,h}=T\widehat V_{\mathrm{VAR},h}^{SV}$, and 
$\widehat F_{T,h}=T\widehat C_h^{SV}$. Under Assumption~\ref{ass:VARb3},
\begin{align}
(\widehat A_{T,h},\widehat D_{T,h},\widehat F_{T,h})
\xrightarrow{p}
(A_h,D_h,F_h). 
\label{eq:scaled-bootstrap-risk-consistency}
\end{align}
If, in addition, the scaled bootstrap risk components satisfy
\begin{align}
\left\|
(\widehat A_{T,h},\widehat D_{T,h},\widehat F_{T,h})
-
(A_h,D_h,F_h)
\right\|
=
O_p(r_T)+O_p(B^{-1/2}),
\label{eq:scaled-bootstrap-risk-rate}
\end{align}
for some deterministic sequence $r_T\to0$, then the scaled risk components converge at that rate.
\end{lemma}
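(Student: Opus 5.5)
The proof splits the scaled bootstrap components into Monte Carlo simulation error and the error of the ideal (conditional) bootstrap moments. Let $Z_{T,h}^{\ast(b)}$ be the centered and scaled bootstrap draw from \eqref{eq:ZThstar}, with entries $Z^{\ast(b)}_{1}$ and $Z^{\ast(b)}_{2}$. The empirical centering uses $\bar\theta^\ast$ rather than $\mathbb{E}^\ast[\cdot]$, so I will write
\[
\widehat A_{T,h}=B^{-1}\sum_{b=1}^B (Z^{\ast(b)}_{1})^2-(\bar Z^{\ast}_{1})^2,\qquad
\bar Z^\ast_1=B^{-1}\sum_{b=1}^B Z^{\ast(b)}_{1},
\]
and similarly for $\widehat D_{T,h}$ and $\widehat F_{T,h}$. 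These three quantities are exactly the entries of $\widehat\Omega_h^{(2)}=B^{-1}\sum_b \widetilde Z^{\ast(b)}_{T,h}\widetilde Z^{\ast(b)\prime}_{T,h}$. The target $(A_h,D_h,F_h)$ equals $(\Omega_{11,h},\Omega_{22,h},\Omega_{12,h})$. To see this, use Assumption~\ref{ass:rate-nondeg} together with the uniform-integrability consequence of Assumption~\ref{ass:VARb2}. These give $T\operatorname{Var}\to\Omega$, and the scaled squared biases vanish because $\sqrt T(\mathbb E\widehat\theta-\theta)\to 0$ by uniform integrability and centered normality.

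\textbf{Step 1 (bootstrap population moments).} By Assumption~\ref{ass:VARb3}(i), $\mathbb{E}^\ast[Z^\ast_{T,h}Z^{\ast\prime}_{T,h}]\xrightarrow{p}\Omega_h^{(2)}$. The centering in \eqref{eq:ZThstar} makes $\mathbb{E}^\ast[Z^\ast_{T,h}]=0$.

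\textbf{Step 2 (simulation error).} Conditional on the data, the draws are i.i.d. I will apply a conditional weak law of large numbers to the products $Z^{\ast(b)}_iZ^{\ast(b)}_j$. These have conditional $(1+\delta/2)$-moments bounded by $\mathbb{E}^\ast\|Z^\ast\|^{2+\delta}=O_p(1)$ from Assumption~\ref{ass:VARb3}(ii). The von Bahr--Esseen inequality (or Chebyshev after truncation) then gives
\[
\mathbb{P}^\ast\Big(\Big|B^{-1}\sum_b Z_i^{\ast(b)}Z_j^{\ast(b)}-\mathbb{E}^\ast[Z_i^\ast Z_j^\ast]\Big|>\epsilon\Big)\le C\,\epsilon^{-(1+\delta/2)}B^{-\delta/2}\,\mathbb{E}^\ast\|Z^\ast\|^{2+\delta}.
\]
The right-hand side is $o_p(1)$ because $B\to\infty$ by Assumption~\ref{ass:VARb3}(iii). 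Likewise $\bar Z^\ast=O_p(B^{-1/2})$ conditionally, since its conditional variance is $O_p(1)/B$, so $(\bar Z^\ast_i\bar Z^\ast_j)=o_p(1)$. A conditional-in-probability statement converts to an unconditional one by taking expectations of the bounded conditional probability, via dominated convergence along subsequences. Combining this with Step 1 yields \eqref{eq:scaled-bootstrap-risk-consistency}.

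\textbf{Step 3 (rate).} The rate part is essentially tautological as stated: if \eqref{eq:scaled-bootstrap-risk-rate} holds, the components converge at rate $O_p(r_T)+O_p(B^{-1/2})$. I will add a remark on where each term comes from. $r_T$ bounds $\|\mathbb{E}^\ast[Z^\ast Z^{\ast\prime}]-\Omega\|$, and $B^{-1/2}$ is the Monte Carlo error, which is available when fourth bootstrap moments are $O_p(1)$. \textbf{Main obstacle:} Step 2's uniform control of the conditional law of large numbers with only $2+\delta$ moments holding in probability. I will handle it with the truncation and von Bahr--Esseen bound above rather than assume fourth moments.
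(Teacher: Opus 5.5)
Your proposal is correct and follows essentially the same route as the paper's proof. Both write the scaled components as empirically recentered second moments of the bootstrap draws, which are i.i.d.\ given the data, use Assumption~\ref{ass:VARb3}(i)--(ii) for the conditional moments plus a conditional law of large numbers as $B\to\infty$, and treat the rate part as a restatement of its own hypothesis.

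Your version is tighter than the paper's in three places:
\begin{itemize}
\item You center at $\mathbb{E}^\ast$, so (i) applies directly. The paper instead centers at $(\widehat\theta_{\mathrm{LP},h},\widehat\theta_{\mathrm{VAR},h})$ and detours through a bounded-Lipschitz claim that (i) alone does not deliver.
\item You give an explicit WLLN bound that needs only $2+\delta$ moments.
\item You justify $(A_h,D_h,F_h)=(\Omega_{11,h},\Omega_{22,h},\Omega_{12,h})$ via Assumptions~\ref{ass:VARb2} and~\ref{ass:rate-nondeg}, whereas the paper simply asserts it.
\end{itemize}
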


\begin{proof}[Proof of Lemma~\ref{lem:SV-risk-proof}]
Before proving the lemma, we clarify the role of the AR-sieve pseudo-truth. 
Under the short-memory Wold assumptions and the AR-sieve lag-order condition, 
with $p_T\to\infty$ sufficiently slowly relative to $T$, the AR-sieve approximation 
targets the same population impulse response in the benchmark setting where LP and 
VAR are both consistent. This motivates the finite-sample bootstrap bias correction 
used in Algorithm~\ref{alg:SV-weight}. However, first-order bootstrap validity does 
not by itself imply consistency of the bootstrap bias estimator, which is a higher-order 
object. Therefore, Lemma~\ref{lem:SV-risk-proof} only uses AR-sieve validity to establish 
consistency of the scaled bootstrap variance-covariance components. A formal consistency 
result for the bootstrap bias terms would require additional higher-order conditions, which 
we do not impose.

Let $Z_{T,h}^\ast=\sqrt{T}\big((\widehat\theta_{\mathrm{LP},h}^\ast,\widehat\theta_{\mathrm{VAR},h}^\ast)'-(\widehat\theta_{\mathrm{LP},h},\widehat\theta_{\mathrm{VAR},h})'\big)$, and write $Z_{T,h}^\ast=(Z_{L,T,h}^\ast,Z_{V,T,h}^\ast)'$. The scaled bootstrap variance and covariance estimators can be written as sample second moments of the bootstrap draws:
\begin{align}
\widehat A_{T,h}
&=
\frac{1}{B}\sum_{b=1}^B
\left(
Z_{L,T,h}^{\ast(b)}
-
\overline Z_{L,T,h}^\ast
\right)^2, \label{eq:Ahat-scaled-bootstrap}\\
\widehat D_{T,h}
&=
\frac{1}{B}\sum_{b=1}^B
\left(
Z_{V,T,h}^{\ast(b)}
-
\overline Z_{V,T,h}^\ast
\right)^2, \label{eq:Dhat-scaled-bootstrap}\\
\widehat F_{T,h}
&=
\frac{1}{B}\sum_{b=1}^B
\left(
Z_{L,T,h}^{\ast(b)}
-
\overline Z_{L,T,h}^\ast
\right)
\left(
Z_{V,T,h}^{\ast(b)}
-
\overline Z_{V,T,h}^\ast
\right). \label{eq:Fhat-scaled-bootstrap}
\end{align}

Assumption~\ref{ass:VARb3}(i) implies that the conditional distribution of $Z_{T,h}^\ast$ consistently approximates the distribution of $Z_{T,h}$ in the bounded-Lipschitz metric. Together with the conditional moment bound in Assumption~\ref{ass:VARb3}(ii), this gives convergence of the relevant conditional quadratic moments. In particular,
\begin{align}
\mathbb E^\ast[(Z_{L,T,h}^\ast)^2] &\xrightarrow{p}\Omega_{11,h}, &
\mathbb E^\ast[(Z_{V,T,h}^\ast)^2] &\xrightarrow{p}\Omega_{22,h}, &
\mathbb E^\ast[Z_{L,T,h}^\ast Z_{V,T,h}^\ast] &\xrightarrow{p}\Omega_{12,h}. 
\label{eq:bootstrap-second-moments}
\end{align}
Equivalently, after recentering by the bootstrap mean, the conditional covariance matrix of $Z_{T,h}^\ast$ converges in probability to $\Omega_h^{(2)}$.

Conditional on the data, the bootstrap draws are i.i.d.\ across $b$. Hence, as $B\to\infty$, the conditional LLN gives convergence of the bootstrap sample second moments to their conditional expectations. Combining this with \eqref{eq:bootstrap-second-moments} yields
\begin{align}
(\widehat A_{T,h},\widehat D_{T,h},\widehat F_{T,h})
\xrightarrow{p}
(\Omega_{11,h},\Omega_{22,h},\Omega_{12,h})
=
(A_h,D_h,F_h).
\end{align}

The displayed rate is not implied by first-order bootstrap validity alone. If the high-level scaled-risk rate condition~\eqref{eq:scaled-bootstrap-risk-rate} holds, then the scaled bootstrap risk components converge at that rate. This rate is carried forward to the plug-in weight in Theorem~\ref{thm:SV-weights-proof}.
\end{proof}

\section{Proof of Theorems}
\label{app:proof_theorem}
\begin{proof}[\textbf{Proof of Theorem \ref{thm:SV-weights-proof}}]
Define $G(A,D,F)=(D-F)/(A+D-2F)$, and let $\bar G(x)=\min\{1,\max\{0,x\}\}$ denote clipping. Then $\widehat w_h=\bar G\!\left(G(\widehat A_{T,h},\widehat D_{T,h},\widehat F_{T,h})\right)$ and $w_h^\star=\bar G\!\left(G(A_h,D_h,F_h)\right)$.

By Lemma~\ref{lem:SV-risk-proof},
\begin{align}
(\widehat A_{T,h},\widehat D_{T,h},\widehat F_{T,h})
\xrightarrow{p}
(A_h,D_h,F_h). \label{eq:weight-proof-risk-components}
\end{align}
Assumption~\ref{ass:rate-nondeg} implies $A_h+D_h-2F_h\ge c>0$. Therefore, $G$ is continuous in a neighborhood of $(A_h,D_h,F_h)$. Since the limiting solution is interior, clipping is asymptotically inactive. The continuous mapping theorem gives $\widehat w_h\xrightarrow{p}w_h^\star$.

For the rate statement, suppose that \eqref{eq:scaled-risk-rate-thm} holds. Because $G$ is continuously differentiable in a neighborhood of $(A_h,D_h,F_h)$ and its denominator is bounded away from zero, a mean-value expansion yields
\begin{align}
|\widehat w_h-w_h^\star|
&\le
C
\left\|
(\widehat A_{T,h},\widehat D_{T,h},\widehat F_{T,h})
-
(A_h,D_h,F_h)
\right\| \nonumber\\
&=
O_p(r_T)+O_p(B^{-1/2}).
\end{align}
This proves the theorem.
\end{proof}

\begin{proof}[\textbf{Proof of Theorem \ref{thm:theta-avg-asyn}}]
Define $\widehat\theta_h(w)=w\widehat\theta_{\mathrm{LP},h}+(1-w)\widehat\theta_{\mathrm{VAR},h}$. Then
\begin{align}
\widehat\theta_h(\widehat w_h)-\theta_h
=
\big(\widehat\theta_h(w_h^\star)-\theta_h\big)
+
(\widehat w_h-w_h^\star)
(\widehat\theta_{\mathrm{LP},h}-\widehat\theta_{\mathrm{VAR},h}). \label{eq:theta-decomp-consistency}
\end{align}
For consistency, Assumption~\ref{ass:VARb2} gives $(\widehat\theta_{\mathrm{LP},h},\widehat\theta_{\mathrm{VAR},h})'\xrightarrow{p}(\theta_h,\theta_h)'$. Together with $\widehat w_h\xrightarrow{p}w_h^\star$ from Theorem~\ref{thm:SV-weights-proof}, this implies $\widehat\theta_h(\widehat w_h)\xrightarrow{p}\theta_h$.

For asymptotic normality, multiply \eqref{eq:theta-decomp-consistency} by $\sqrt T$:
\begin{align}
\sqrt T\big(\widehat\theta_h(\widehat w_h)-\theta_h\big)
&=
\sqrt T\big(\widehat\theta_h(w_h^\star)-\theta_h\big)
+
(\widehat w_h-w_h^\star)
\sqrt T(\widehat\theta_{\mathrm{LP},h}-\widehat\theta_{\mathrm{VAR},h}). \label{eq:theta-decomp-an}
\end{align}
By Assumption~\ref{ass:VARb2}, $\sqrt T(\widehat\theta_{\mathrm{LP},h}-\widehat\theta_{\mathrm{VAR},h})=O_p(1)$, and by Theorem~\ref{thm:SV-weights-proof}, $\widehat w_h-w_h^\star=o_p(1)$. Thus the second term in \eqref{eq:theta-decomp-an} is $o_p(1)$.

For the first term,
\begin{align}
\sqrt T\big(\widehat\theta_h(w_h^\star)-\theta_h\big)
=
e(w_h^\star)'
\sqrt T
\Big((\widehat\theta_{\mathrm{LP},h},\widehat\theta_{\mathrm{VAR},h})'-(\theta_h,\theta_h)'\Big),
\end{align}
where $e(w_h^\star)=(w_h^\star,1-w_h^\star)'$. By Assumption~\ref{ass:VARb2} and the continuous mapping theorem,
\begin{align}
\sqrt T\big(\widehat\theta_h(w_h^\star)-\theta_h\big)
\Rightarrow
\mathcal N\!\left(
0,\,
e(w_h^\star)'\Omega_h^{(2)}e(w_h^\star)
\right). \label{eq:fixed-weight-limit}
\end{align}
Since $e(w_h^\star)'\Omega_h^{(2)}e(w_h^\star)=V_h(w_h^\star,\Omega_h^{(2)})$, Slutsky's theorem gives the stated limit for $\widehat\theta_h(\widehat w_h)$.
\end{proof}

\begin{proof}[\textbf{Proof of Theorem \ref{thm:SV-Vhat-proof}}]
The map $(w,\Omega)\mapsto V_h(w,\Omega)$ is a polynomial in $w$ and the entries of $\Omega$, hence continuous. By Theorem~\ref{thm:SV-weights-proof}, $\widehat w_h\xrightarrow{p}w_h^\star$, and by assumption, $\widehat\Omega_h^{(2)}\xrightarrow{p}\Omega_h^{(2)}$. The continuous mapping theorem gives
\begin{align}
\widehat V_h \xrightarrow{p} V_h(w_h^\star,\Omega_h^{(2)}).
\end{align}

For the rate statement, suppose that $|\widehat w_h-w_h^\star|=O_p(r_T)+O_p(B^{-1/2})$ and $\|\widehat\Omega_h^{(2)}-\Omega_h^{(2)}\|=O_p(s_T)+O_p(B^{-1/2})$ for some $s_T\to0$. A mean-value expansion of $V_h(w,\Omega)$ gives
\begin{align}
|\widehat V_h- V_h(w_h^\star,\Omega_h^{(2)})|
&\le
C_1|\widehat w_h-w_h^\star|
+
C_2\|\widehat\Omega_h^{(2)}-\Omega_h^{(2)}\| \nonumber\\
&=
O_p(r_T+s_T)+O_p(B^{-1/2}).
\end{align}
\end{proof}

\section{Online Supplementary Appendix}

This appendix contains four sections. Section~\ref{app:MCuni_tables} reports additional univariate simulation results. Section~\ref{app:multi_details} documents the multivariate DGP. Section~\ref{app:DGP_matrices} provides additional multivariate simulation results. Section~\ref{app:IRFs} presents additional results for the empirical application, including the estimated weights on the IV-LP estimator and impulse response functions with pointwise confidence intervals.

\subsection{Additional Univariate Simulation Results}
\label{app:MCuni_tables}

\begin{table}[!htbp]
\centering
\caption{RMSE of univariate estimator averaging weights relative to oracle with $\rho=0.5$}
\label{tab:uni_weight_0.5}
\begin{tabular}{lccc}
\toprule
$T$ & $h=1$ & $h=3$ & $h=6$ \\
\midrule
\multicolumn{4}{c}{AR(1)}\\
\hline
200 & 0.2057 & 0.1399 & 0.0343 \\
400 & 0.1555 & 0.1230 & 0.0380 \\
800 & 0.1341 & 0.1122 & 0.0487 \\
\addlinespace
\multicolumn{4}{c}{local-to-AR(1) with $\alpha=0.5$}\\
\hline
200 & 0.0391 & 0.2427 & 0.2952 \\
400 & 0.1139 & 0.1570 & 0.2404 \\
800 & 0.1896 & 0.1713 & 0.2000 \\
\addlinespace
\multicolumn{4}{c}{ARMA(1,1) with $\alpha=0.5$}\\
\hline
200 & 0.0391 & 0.2427 & 0.2952 \\
400 & 0.0078 & 0.1657 & 0.3448 \\
800 & 0.0013 & 0.0913 & 0.3069 \\
\addlinespace
\multicolumn{4}{c}{local-to-AR(1) with $\alpha=0.9$}\\
\hline
200 & 0.0002 & 0.3826 & 0.3071 \\
400 & 0.0007 & 0.2146 & 0.3122 \\
800 & 0.0041 & 0.0815 & 0.3030 \\
\addlinespace
\multicolumn{4}{c}{ARMA(1,1) with $\alpha=0.9$}\\
\hline
200 & 0.0002 & 0.3826 & 0.3071 \\
400 & 0.0000 & 0.2818 & 0.2983 \\
800 & 0.0000 & 0.2079 & 0.2825 \\
\bottomrule
\end{tabular}
\end{table}

\begin{table}[!htbp]
\centering
\caption{RMSE of univariate estimator averaging weights relative to oracle with $\rho=0.9$}
\label{tab:uni_weight_0.9}
\begin{tabular}{lccc}
\toprule
$T$ & $h=1$ & $h=3$ & $h=6$ \\
\midrule
\multicolumn{4}{c}{AR(1)}\\
\hline
200 & 0.1683 & 0.0947 & 0.0856 \\
400 & 0.1398 & 0.0911 & 0.0850 \\
800 & 0.1136 & 0.0713 & 0.0583 \\
\addlinespace
\multicolumn{4}{c}{local-to-AR(1) with $\alpha=0.5$}\\
\hline
200 & 0.0095 & 0.4604 & 0.1793 \\
400 & 0.0126 & 0.3023 & 0.1544 \\
800 & 0.0167 & 0.1983 & 0.1543 \\
\addlinespace
\multicolumn{4}{c}{ARMA(1,1) with $\alpha=0.5$}\\
\hline
200 & 0.0095 & 0.4604 & 0.1793 \\
400 & 0.0018 & 0.2643 & 0.3138 \\
800 & 0.0001 & 0.1135 & 0.4164 \\
\addlinespace
\multicolumn{4}{c}{local-to-AR(1) with $\alpha=0.9$}\\
\hline
200 & 0.0006 & 0.3158 & 0.4447 \\
400 & 0.0002 & 0.1724 & 0.4560 \\
800 & 0.0007 & 0.1335 & 0.3628 \\
\addlinespace
\multicolumn{4}{c}{ARMA(1,1) with $\alpha=0.9$}\\
\hline
200 & 0.0006 & 0.3158 & 0.4447 \\
400 & 0.0000 & 0.0979 & 0.6427 \\
800 & 0.0000 & 0.0000 & 0.4741 \\
\bottomrule
\end{tabular}
\end{table}

\begin{sidewaystable}[!htbp]
\centering
\caption{RMSE of univariate IRF estimators with $\rho=0.9$}
\label{tab:uni_rmse_0.9}
\resizebox{\textwidth}{!}{%
\begin{tabular}{lcccccccccccccccccc}
\toprule
& \multicolumn{6}{c}{$h=1$} & \multicolumn{6}{c}{$h=3$} & \multicolumn{6}{c}{$h=6$} \\
\cmidrule(lr){2-7}\cmidrule(lr){8-13}\cmidrule(lr){14-19}
$T$ & $\widehat{\theta}_{VAR}$ & $\widehat{\theta}_{LP}$ & $\widehat{\theta}_O$ & $\widehat{\theta}_P$ & $\widehat{\theta}_{TLP}$ & $\widehat{\theta}_M$ & $\widehat{\theta}_{VAR}$ & $\widehat{\theta}_{LP}$ & $\widehat{\theta}_O$ & $\widehat{\theta}_P$ & $\widehat{\theta}_{TLP}$ & $\widehat{\theta}_M$ & $\widehat{\theta}_{VAR}$ & $\widehat{\theta}_{LP}$ & $\widehat{\theta}_O$ & $\widehat{\theta}_P$ &$\widehat{\theta}_{TLP}$ & $\widehat{\theta}_M$ \\
\midrule
\multicolumn{19}{c}{AR(1)}\\
\hline
200 & 0.0420 & 0.0727 & 0.0417 & 0.0484 & 0.0530 & 0.0500 & 0.0941 & 0.1215 & 0.0940 & 0.0951 & 0.1029 & 0.0982 & 0.1236 & 0.1587 & 0.1236 & 0.1239 & 0.1367 & 0.1260 \\
400 & 0.0249 & 0.0500 & 0.0249 & 0.0285 & 0.0343 & 0.0327 & 0.0578 & 0.0822 & 0.0580 & 0.0583 & 0.0667 & 0.0622 & 0.0797 & 0.1023 & 0.0799 & 0.0800 & 0.0882 & 0.0814 \\
800 & 0.0168 & 0.0351 & 0.0168 & 0.0192 & 0.0237 & 0.0225 & 0.0400 & 0.0584 & 0.0400 & 0.0403 & 0.0468 & 0.0436 & 0.0570 & 0.0726 & 0.0570 & 0.0570 & 0.0622 & 0.0584 \\
\addlinespace
\multicolumn{19}{c}{local-to-AR(1), $\alpha=0.5$}\\
\hline
200 & 0.4684 & 0.1199 & 0.1199 & 0.1213 & 0.1360 & 0.2861 & 0.3277 & 0.2031 & 0.2033 & 0.2236 & 0.2478 & 0.2593 & 0.1880 & 0.2547 & 0.1868 & 0.1981 & 0.2074 & 0.1865 \\
400 & 0.3216 & 0.0589 & 0.0589 & 0.0607 & 0.0692 & 0.1808 & 0.2079 & 0.1183 & 0.1173 & 0.1272 & 0.1459 & 0.1541 & 0.1015 & 0.1501 & 0.1014 & 0.1062 & 0.1144 & 0.1037 \\
800 & 0.2216 & 0.0366 & 0.0366 & 0.0383 & 0.0428 & 0.1199 & 0.1334 & 0.0757 & 0.0731 & 0.0767 & 0.0889 & 0.0952 & 0.0592 & 0.0959 & 0.0592 & 0.0638 & 0.0707 & 0.0636 \\
\addlinespace
\multicolumn{19}{c}{ARMA(1,1), $\alpha=0.5$}\\
\hline
200 & 0.4684 & 0.1199 & 0.1199 & 0.1213 & 0.1360 & 0.2861 & 0.3277 & 0.2031 & 0.2033 & 0.2236 & 0.2478 & 0.2593 & 0.1880 & 0.2547 & 0.1868 & 0.1981 & 0.2074 & 0.1865 \\
400 & 0.4613 & 0.1050 & 0.1050 & 0.1051 & 0.1127 & 0.2779 & 0.3079 & 0.1482 & 0.1482 & 0.1594 & 0.1936 & 0.2298 & 0.1511 & 0.1742 & 0.1450 & 0.1482 & 0.1516 & 0.1447 \\
800 & 0.4581 & 0.0999 & 0.0999 & 0.0999 & 0.1038 & 0.2756 & 0.2989 & 0.1164 & 0.1164 & 0.1182 & 0.1502 & 0.2158 & 0.1335 & 0.1271 & 0.1179 & 0.1228 & 0.1246 & 0.1231 \\
\addlinespace
\multicolumn{19}{c}{local-to-AR(1), $\alpha=0.9$}\\
\hline
200 & 0.8613 & 0.4055 & 0.4055 & 0.4055 & 0.4171 & 0.6285 & 0.6313 & 0.3969 & 0.3969 & 0.4102 & 0.4828 & 0.5310 & 0.3826 & 0.3734 & 0.3597 & 0.3526 & 0.3693 & 0.3646 \\
400 & 0.5941 & 0.1818 & 0.1818 & 0.1818 & 0.1883 & 0.3838 & 0.4085 & 0.1977 & 0.1977 & 0.2014 & 0.2559 & 0.3145 & 0.2123 & 0.2058 & 0.1942 & 0.1911 & 0.2002 & 0.1990 \\
800 & 0.4099 & 0.0785 & 0.0785 & 0.0785 & 0.0828 & 0.2404 & 0.2633 & 0.1032 & 0.1032 & 0.1058 & 0.1337 & 0.1876 & 0.1134 & 0.1189 & 0.1039 & 0.1093 & 0.1099 & 0.1060 \\
\addlinespace
\multicolumn{19}{c}{ARMA(1,1), $\alpha=0.9$}\\
\hline
200 & 0.8613 & 0.4055 & 0.4055 & 0.4055 & 0.4171 & 0.6285 & 0.6313 & 0.3969 & 0.3969 & 0.4102 & 0.4828 & 0.5310 & 0.3826 & 0.3734 & 0.3597 & 0.3526 & 0.3693 & 0.3646 \\
400 & 0.8550 & 0.3960 & 0.3960 & 0.3960 & 0.4017 & 0.6217 & 0.6144 & 0.3537 & 0.3537 & 0.3545 & 0.4212 & 0.5059 & 0.3529 & 0.3037 & 0.3037 & 0.2904 & 0.3247 & 0.3322 \\
800 & 0.8521 & 0.3934 & 0.3934 & 0.3934 & 0.3961 & 0.6194 & 0.6065 & 0.3322 & 0.3322 & 0.3322 & 0.3739 & 0.4938 & 0.3388 & 0.2643 & 0.2643 & 0.2565 & 0.3002 & 0.3149 \\
\bottomrule
\end{tabular}
}
\begin{minipage}{\textwidth}\footnotesize
Note: $\widehat{\theta}_{VAR}$ and $\widehat{\theta}_{LP}$ are the VAR and LP IRF estimators.
$\widehat{\theta}_O$ uses the infeasible oracle min-MSE weight $w_h^\star$ in \eqref{eq:wstar}.
$\widehat{\theta}_P$ uses the sieve-bootstrap plug-in weight $\widehat w_P$ (Algorithm~\ref{alg:SV-weight}).
$\widehat{\theta}_{TLP}$ is the TLP estimator. $\widehat{\theta}_M$ uses the $R^2$-based model-averaging weight $\widehat w_{M}$ in \eqref{eq:wr2}. The autoregressive coefficient ($\rho$) is 0.9 in each case. Results are based on 1,000 Monte Carlo replications and 500 bootstrap iterations.
\end{minipage}
\end{sidewaystable}

\newpage
\subsection{Multivariate DGP Details}
\label{app:multi_details}
The coefficient matrices for the multivariate simulations in Section \ref{subsec:MC_multi} are defined as follows. For both specifications, the dimension is $n=3$, the autoregressive order is $p=4$, and the structural impact matrix is defined below.

\subsection*{SVAR(4) Specification}
The autoregressive coefficients $A_1^s, \dots, A_4^s$ are:
\begin{align*}
A_1^s &= \begin{pmatrix} 
1.31 & 0.75 & 0.25 \\ 
-0.12 & 2.08 & 0.23 \\ 
-0.23 & 0.56 & 1.75 
\end{pmatrix}, &
A_2^s &= \begin{pmatrix} 
-0.52 & -1.06 & -0.35 \\ 
0.16 & -1.59 & -0.33 \\ 
0.32 & -0.78 & -1.12 
\end{pmatrix}, \\[1em]
A_3^s &= \begin{pmatrix} 
0.04 & 0.48 & 0.16 \\ 
-0.08 & 0.53 & 0.15 \\ 
-0.14 & 0.35 & 0.31 
\end{pmatrix}, &
A_4^s &= \begin{pmatrix} 
0.01 & -0.07 & -0.02 \\ 
0.01 & -0.06 & -0.02 \\ 
0.02 & -0.05 & -0.03 
\end{pmatrix}.
\end{align*}
The structural impact matrix $M_0^s$ is:
\begin{equation*}
M_0^s = \begin{pmatrix} 
2.0 & -1.5 & 0.2 \\
1.7 & 1.3 & 0.7 \\
0.6 & -0.6 & 1.7 
\end{pmatrix}
\end{equation*}

\subsection*{SVARMA(4,1) Specification}
The autoregressive coefficients $A_1^m, \dots, A_4^m$ are:
\begin{align*}
A_1^m &= \begin{pmatrix} 
1.24 & -0.04 & -0.03 \\ 
-0.58 & 1.77 & 0.32 \\ 
-0.78 & 0.76 & 1.63 
\end{pmatrix}, &
A_2^m &= \begin{pmatrix} 
-0.52 & 0.02 & 0.06 \\ 
0.74 & -1.23 & -0.39 \\ 
1.04 & -0.98 & -1.02 
\end{pmatrix}, \\[1em]
A_3^m &= \begin{pmatrix} 
0.08 & 0.00 & -0.03 \\ 
-0.30 & 0.39 & 0.16 \\ 
-0.44 & 0.41 & 0.29 
\end{pmatrix}, &
A_4^m &= \begin{pmatrix} 
-0.01 & 0.00 & 0.00 \\ 
0.04 & -0.04 & -0.02 \\ 
0.06 & -0.05 & -0.03 
\end{pmatrix}.
\end{align*}
The moving average matrix $M_1^m$ is:
\begin{equation*}
M_1^m = \begin{pmatrix} 
-0.30 & 0.10 & -0.40 \\
-0.20 & 0.20 & -1.00 \\
-0.30 & 0.07 & 0.20 
\end{pmatrix}
\end{equation*}
The structural impact matrix $M_0^m$ is:
\begin{equation*}
M_0^m = \begin{pmatrix} 
1.30 & 0.40 & 0.10 \\
-0.02 & 0.05 & 2.00 \\
-0.08 & -1.70 & 0.80 
\end{pmatrix}
\end{equation*}

\newpage
\subsection{Additional Multivariate Simulation Results}
\label{app:DGP_matrices}

\begin{table}[!htbp]
\centering
\caption{RMSE of estimator averaging weights relative to oracle}
\label{tab:multi_weight_rmse}
\begin{tabular}{lccccccccc}
\toprule
& \multicolumn{3}{c}{SVAR(4)} 
& \multicolumn{3}{c}{local-to-SVAR(4)} 
& \multicolumn{3}{c}{SVARMA(4,1)} \\
\cmidrule(lr){2-4}\cmidrule(lr){5-7}\cmidrule(lr){8-10}
$T$ 
& $h=1$ & $h=6$ & $h=12$
& $h=1$ & $h=6$ & $h=12$
& $h=1$ & $h=6$ & $h=12$ \\
\midrule
200  
& 0.2690 & 0.2985 & 0.0188
& 0.2921 & 0.0981 & 0.0141
& 0.2921 & 0.0981 & 0.0141 \\
800  
& 0.1642 & 0.1777 & 0.0273
& 0.2928 & 0.1147 & 0.0160
& 0.2946 & 0.1132 & 0.0178 \\
2000 
& 0.0994 & 0.1440 & 0.0581
& 0.2787 & 0.1011 & 0.0445
& 0.2967 & 0.1797 & 0.0330 \\
\bottomrule
\end{tabular}
\end{table}

Figure \ref{fig:true_irf_plots} displays the true impulse response functions alongside the average estimates from the VAR and LP estimators ($T=200$).

\begin{figure}[!htbp]
  \centering
  \begin{subfigure}[t]{0.48\textwidth}
    \centering
    \includegraphics[width=\textwidth]{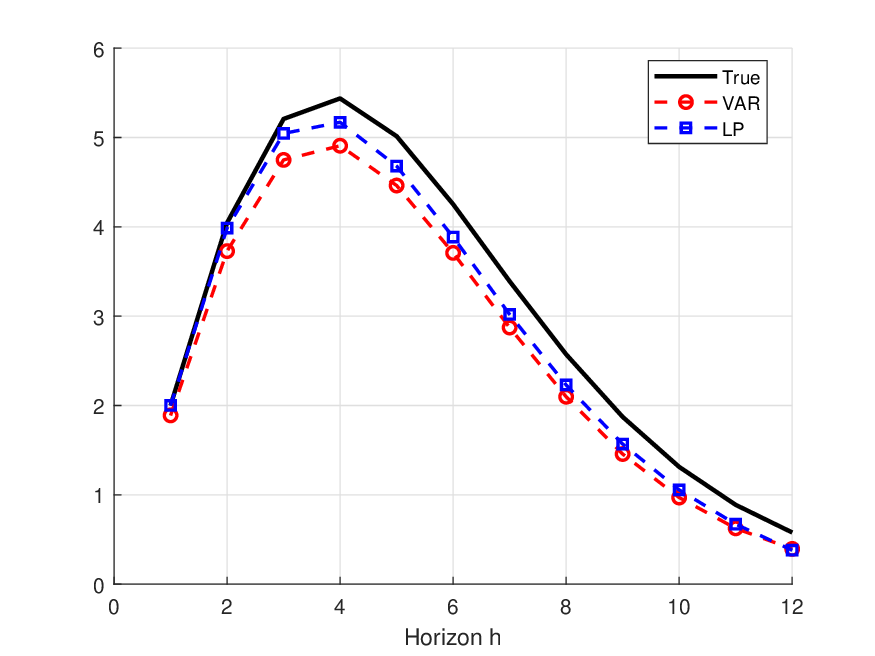}
    \caption{SVAR(4)}
    \label{fig:irfVAR4}
  \end{subfigure}\hfill
  \begin{subfigure}[t]{0.48\textwidth}
    \centering
    \includegraphics[width=\textwidth]{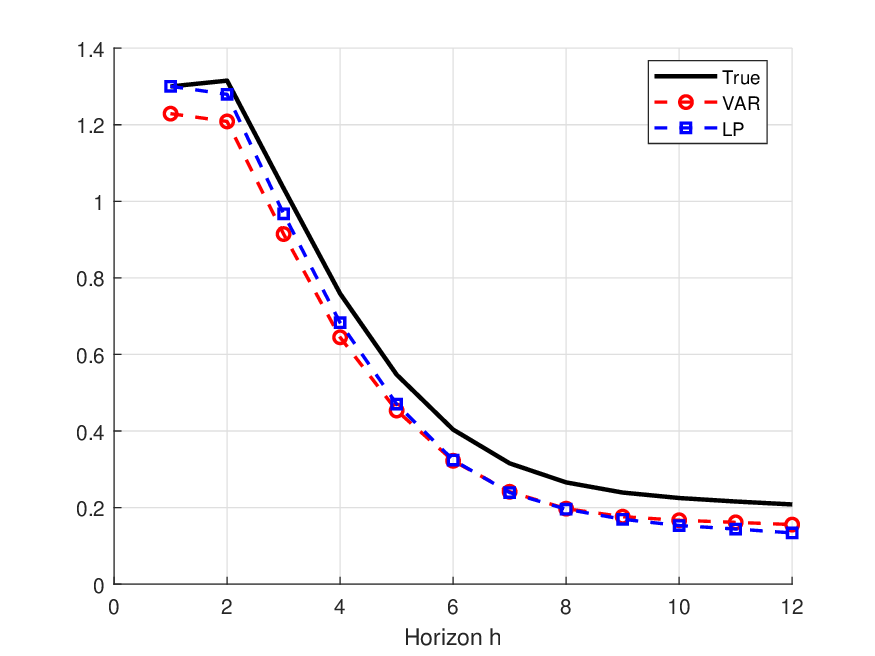}
    \caption{SVARMA(4,1)}
    \label{fig:irfVARMA41}
  \end{subfigure}
  \caption{True impulse responses and average VAR and LP estimates for two multivariate data-generating processes. Results are based on a sample size of $T=200$, using 1,000 Monte Carlo replications and 500 bootstrap iterations. VAR lag order is selected via AIC and is also applied to the LP estimations.}
  \label{fig:true_irf_plots}
\end{figure}

\newpage
\section{Additional Results for the Application}
\label{app:IRFs}

\begin{figure}[htbp]
  \centering
  \includegraphics[width=\textwidth]{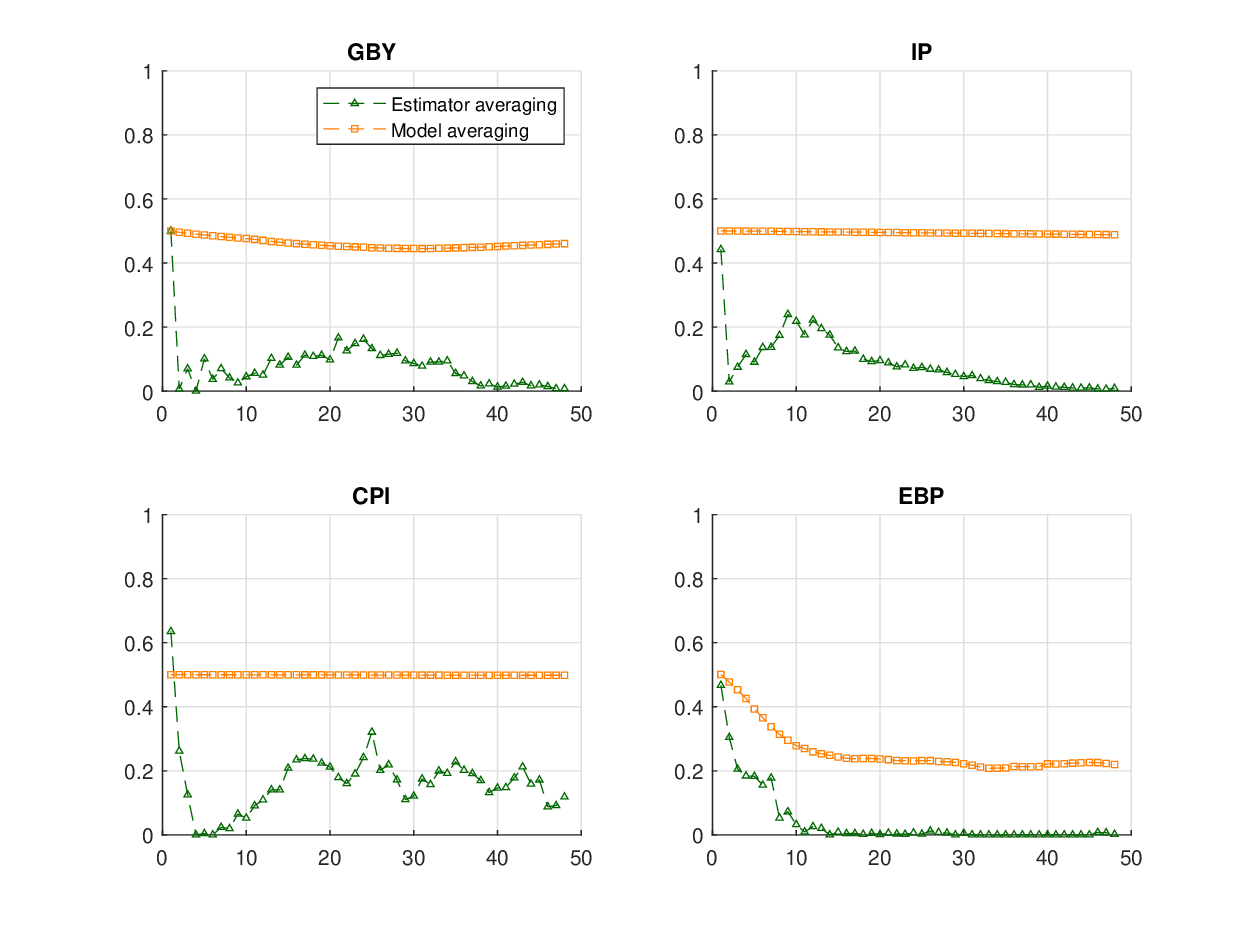}
  \caption{Estimated weights on the IV-LP estimator across the model-averaging and estimator-averaging approaches for each macroeconomic variable.}
  \label{fig:emp:weights}
\end{figure}

\begin{figure}[htbp]
  \centering
  \includegraphics[width=\textwidth]{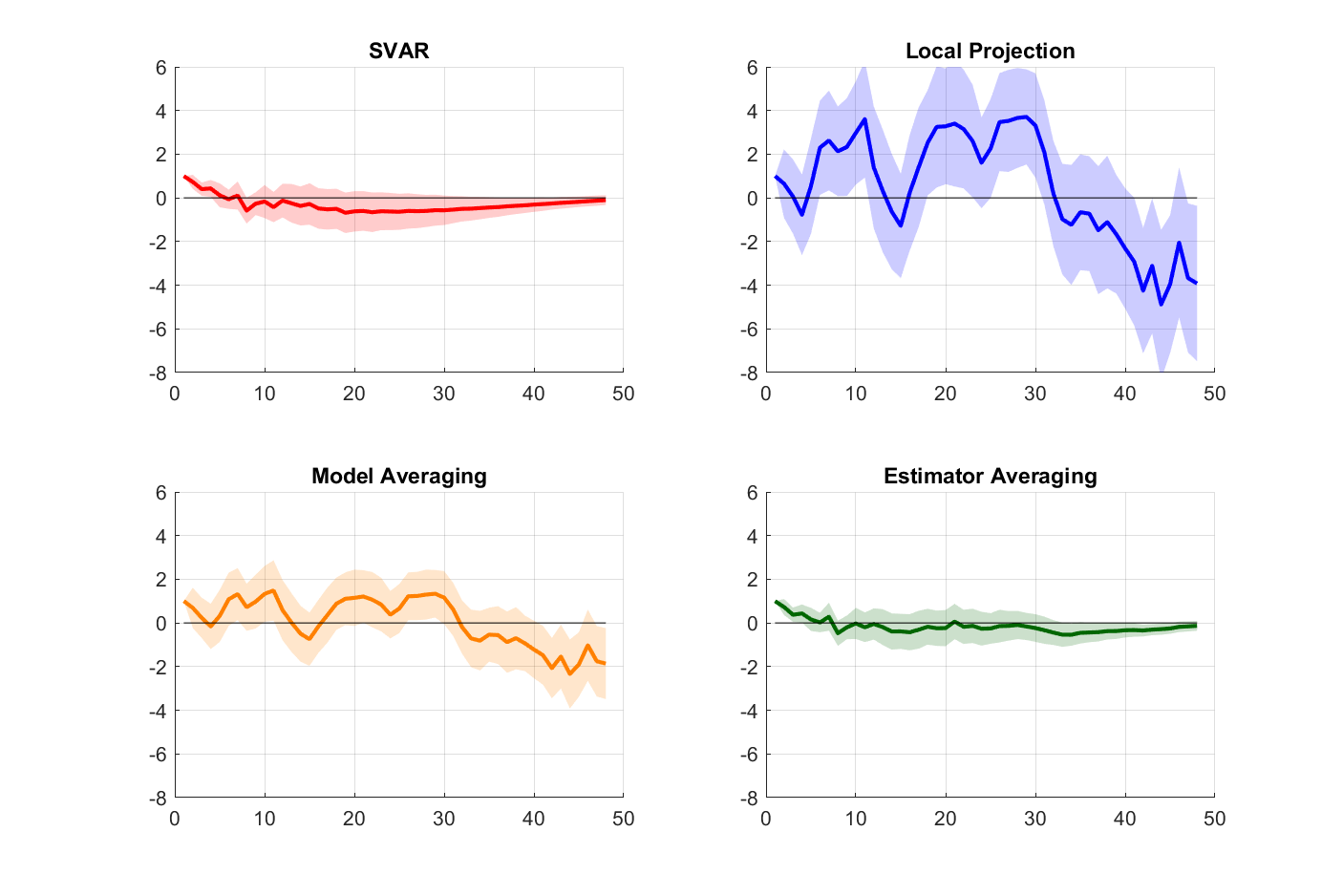}
  \caption{Estimated responses of the two-year Treasury yield to a 25-basis-point contractionary monetary policy shock. The four panels display the estimates from the IV-SVAR, IV-LP, model-averaging, and estimator-averaging approaches. The 68 percent pointwise confidence intervals are calculated using a nested VAR-sieve wild bootstrap procedure (500 inner and outer iterations) and are centered around the point estimates.}
  \label{fig:emp:GBY}
\end{figure}

\begin{figure}[htbp]
  \centering
  \includegraphics[width=\textwidth]{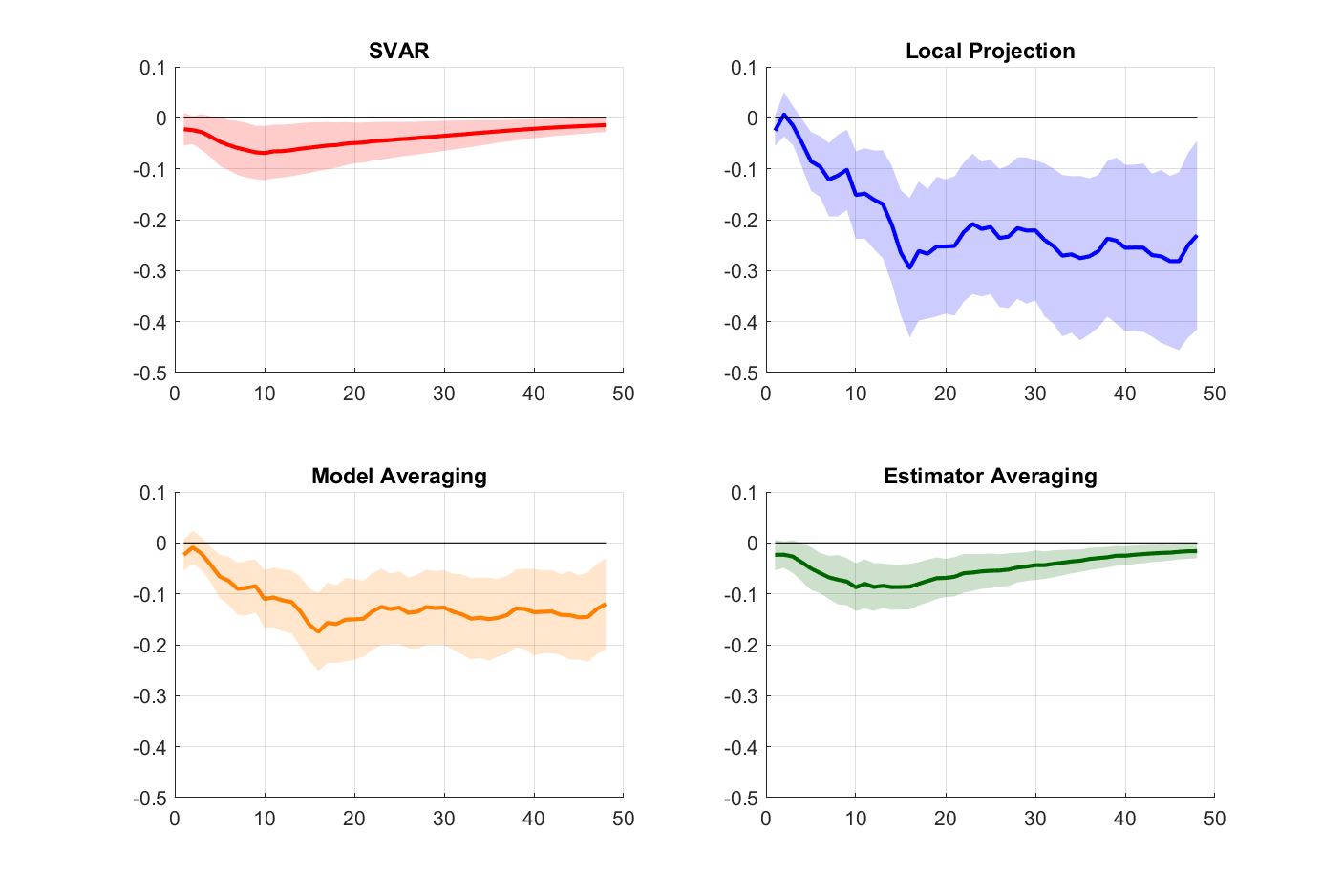}
  \caption{Estimated responses of industrial production to a 25-basis-point contractionary monetary policy shock. The four panels display the estimates from the IV-SVAR, IV-LP, model-averaging, and estimator-averaging approaches. The 68 percent pointwise confidence intervals are calculated using a nested VAR-sieve wild bootstrap procedure (500 inner and outer iterations) and are centered around the point estimates.}
  \label{fig:emp:IP}
\end{figure}

\begin{figure}[htbp]
  \centering
  \includegraphics[width=\textwidth]{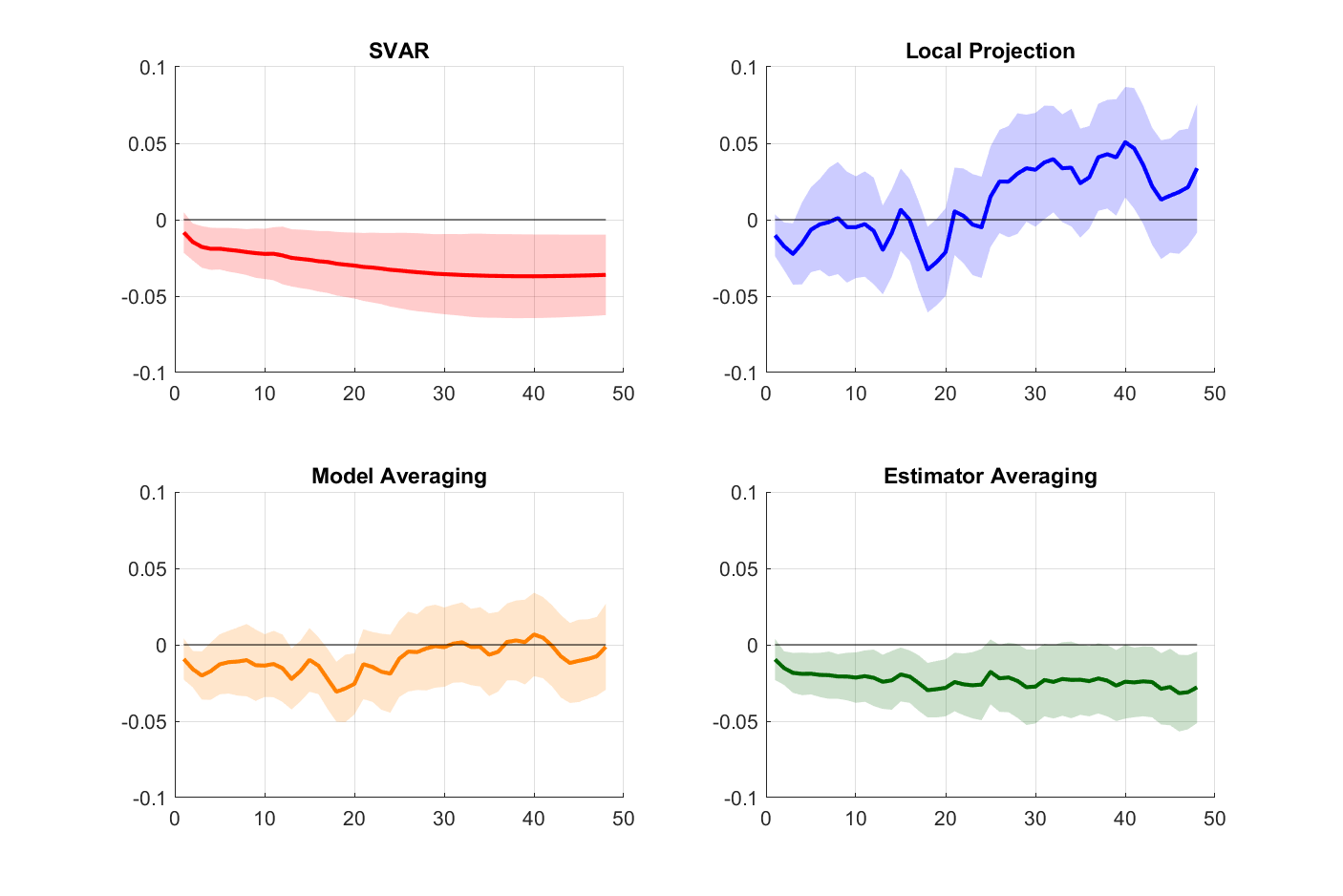}
  \caption{Estimated responses of consumer prices to a 25-basis-point contractionary monetary policy shock. The four panels display the estimates from the IV-SVAR, IV-LP, model-averaging, and estimator-averaging approaches. The 68 percent pointwise confidence intervals are calculated using a nested VAR-sieve wild bootstrap procedure (500 inner and outer iterations) and are centered around the point estimates.}
  \label{fig:emp:CPI}
\end{figure}

\begin{figure}[htbp]
  \centering
  \includegraphics[width=\textwidth]{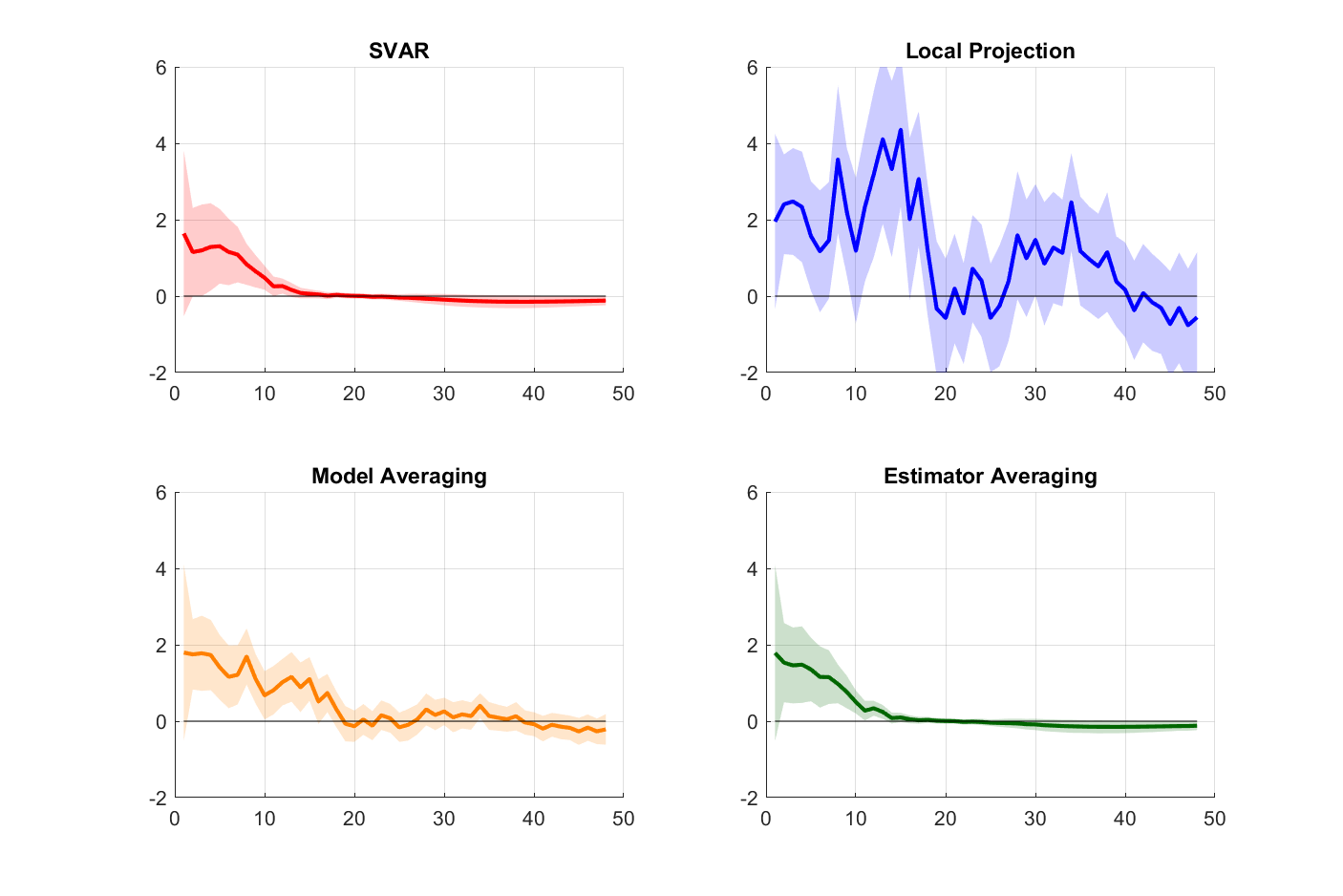}
  \caption{Estimated responses of the excess bond premium to a 25-basis-point contractionary monetary policy shock. The four panels display the estimates from the IV-SVAR, IV-LP, model-averaging, and estimator-averaging approaches. The 68 percent pointwise confidence intervals are calculated using a nested VAR-sieve wild bootstrap procedure (500 inner and outer iterations) and are centered around the point estimates.}
  \label{fig:emp:EBP}
\end{figure}

\newpage
\singlespacing

\end{appendices}
\end{document}